\pgfplotsset{compat=newest}
	\newcommand{\at}[1]{\todo[inline,color=teal!10,caption={AT}]{\textbf{AT:} #1}}
	\newcommand{\emh}[1]{\todo[inline,color=orange!10,caption={EMH}]{\textbf{EMH:} #1}}
	\newcommand{\ly}[1]{\todo[inline,color=orange!10,caption={LY}]{\textbf{LY:} #1}}
	\newcommand{\lz}[1]{\todo[inline,color=red!10,caption={ZLj}]{\textbf{ZLj:} #1}}
	\newcommand{\xb}[1]{\todo[inline,color=orange!10,caption={XB}]{\textbf{XB:} #1}}
	\newcommand{\at}[1]{}
	\newcommand{\emh}[1]{}
	\newcommand{\ly}[1]{}
	\newcommand{\lz}[1]{}
	\newcommand{\xb}[1]{}
\renewcommand{\epsilon}{\varepsilon}
\renewcommand{\phi}{\varphi}
\renewcommand{\vec}[1]{\mathbf{#1}}
\newcommand{\bigO}{\mathcal{O}}
\newcommand{\naturals}{\mathbb{N}}
\newcommand{\reals}{\mathbb{R}}
\newcommand{\posreals}{\reals_{\geq 0}}
\newcommand{\abs}[1]{\lvert #1 \rvert}
\newcommand{\norm}[2][]{\lVert #2 \rVert_{#1}}
\newcommand{\upperboundFXBeta}{\mathit{UB}}
\newcommand{\interval}[2]{[#1,#2]}
\newcommand{\size}[1]{|#1|}
\newcommand{\setnocond}[1]{\{#1\}}
\newcommand{\setcond}[2]{\{\, #1 \,|\, #2 \,\}}
\newcommand{\dirac}[1]{\delta_{#1}}
\newcommand{\mystackrelsingle}[2]{%
  \mathrel{\vbox{\offinterlineskip\ialign{%
    \hfil##\hfil\cr%
    $\scriptstyle#1$\cr%
    \noalign{\kern.3ex}%
    $#2$\cr%
}}}}
\newcommand{\AP}{\mathit{AP}}
\newcommand{\modelsymbol}[1]{\mathcal{#1}}
\newcommand{\dtmc}[1][D]{\modelsymbol{#1}}
\newcommand{\pdtmc}[1][D]{\dtmc[#1]_{\parametersSet}}
\newcommand{\dtmrm}[1][R]{\modelsymbol{#1}}
\newcommand{\pdtmrm}[1][R]{\dtmc[#1]_{\parametersSet}}
\newcommand{\mstates}{S}
\newcommand{\minit}[1][s]{\bar{#1}}
\newcommand{\mtransitions}{\mathbf{P}}
\newcommand{\mlabelling}{L}
\newcommand{\mreward}[1][r]{\mathfrak{#1}}
\newcommand{\mpath}{\pi}
\newcommand{\mpaths}[1]{\mathit{Paths}(#1)}
\newcommand{\mpathsfin}[1]{\mathit{Paths}^{*}(#1)}
\newcommand{\cylinder}[1]{\mathit{Cyl}(#1)}
\newcommand{\prob}{\mathit{Pr}}
\newcommand{\exprew}{\mathit{ExpRew}}
\newcommand{\parameter}[1][v]{#1}
\newcommand{\parameters}[1][\parameter]{\vec{#1}}
\newcommand{\parametersSet}{\mathrm{V}}
\newcommand{\ratfunsSet}[1][\parametersSet]{\mathcal{F}_{#1}}
\newcommand{\polynomialsSet}[1][\parametersSet]{\mathcal{P}_{#1}}
\DeclareMathOperator{\range}{range}
\newcommand{\evaluation}{\nu}
\newcommand{\evaluate}[2][\evaluation]{#2\langle#1\rangle}
\newcommand{\success}{\checked}
\newcommand{\failure}{\mathord{\times}}
\newcommand{\lsf}{\phi}
\newcommand{\lpf}{\psi}
\newcommand{\lP}[1]{\mathtt{P}_{#1}}
\newcommand{\lER}[1]{\mathtt{R}_{#1}}
\newcommand{\lX}{\mathbf{X}}
\newcommand{\lU}{\mathbin{\mathbf{U}}}
\newcommand{\lBU}[1]{\mathbin{\mathbf{U}^{\leq #1}}}
\newcommand{\ltrue}{\mathtt{tt}}
\newcommand{\lF}{\mathbf{F}}
\newcommand{\errorRate}{\epsilon}
\newcommand{\significanceLevel}{\eta}
\newcommand{\margin}{\lambda}
\newcommand{\safetyLevel}{\zeta}
\newcommand{\rewardLevel}{\rho}
\newcommand{\approxfun}[1]{\tilde{#1}}
\newcommand{\pacmc}[1][]{\textsc{Tool}\ensuremath{^{#1}}\xspace}
\newcommand{\storm}{\textsc{Storm}\xspace}
\newcommand{\prism}{\textsc{PRISM}\xspace}
\newcommand{\matlab}{\textsc{MATLAB}\xspace}
\newcommand{\benchexec}{\textsc{BenchExec}\xspace}
\spnewtheorem{assumption}{Assumption}{\bfseries}{\itshape}
\title{Scenario Approach for Parametric Markov Models}
 \author{
 Ying Liu\inst{1,2}
 \and
 Andrea Turrini\inst{1,3}
  \orcidID{0000-0003-4343-9323}
 \and
 Moritz Hahn\inst{4}
 \and
 Bai Xue\inst{1}
 \and
 Lijun Zhang\inst{1,2,3}
  \orcidID{0000-0002-3692-2088}
 }
 \institute{
 State Key Laboratory of Computer Science, 
 Institute of Software, Chinese Academy of Sciences, China
 \and
 University of Chinese Academy of Sciences, China
 \and
 Institute of Intelligent Software Guangzhou, China
 \and Formal Methods and Tools, University of Twente, Enschede, The Netherlands
 }
\begin{document}

\maketitle

\setcounter{footnote}{0}

\begin{abstract}
    In this paper, we propose an approximating framework for analyzing parametric Markov models. Instead of computing complex rational functions encoding the reachability probability and the reward values of the parametric model, we exploit the scenario approach to synthesize a relatively simple polynomial approximation. The approximation is probably approximately correct (PAC), meaning that with high confidence, the approximating function is close to the actual function with an allowable error. With the PAC approximations, one can check properties of the parametric Markov models. We show that the scenario approach can also be used to check PRCTL properties directly -- without synthesizing the polynomial at first hand.  We have implemented our algorithm in a prototype tool and conducted thorough experiments. The experimental results demonstrate that our tool is able to compute polynomials for more benchmarks than state-of-the-art tools such as \prism and \storm, confirming the efficacy of our PAC-based synthesis.
\end{abstract}

\section{Introduction}
\label{sec:introduction}

Markov models (see, e.g.,~\cite{DBLP:books/wi/Puterman94}) have been widely applied to reason about quantitative properties in numerous domains, such as networked, distributed systems, biological systems~\cite{von2006five}, and reinforcement learning~\cite{DBLP:journals/ml/WatkinsD92, DBLP:conf/icra/BaiCYHL15}. 
Properties analyzed on Markov models can either be simple, such as determining the value of the probability that a certain set of unsafe states is reached and how an expected reward value compares with a specified threshold, or complex, involving employing temporal logics such as PCTL~\cite{DBLP:journals/fac/HanssonJ94,DBLP:conf/fsttcs/BiancoA95} and PRCTL~\cite{DBLP:conf/formats/AndovaHK03}.
To verify these properties, various advanced tools have been developed, such as
\prism~\cite{DBLP:conf/cav/KwiatkowskaNP11}, \storm~\cite{DBLP:journas/sttt/HenselJKQV21,DBLP:conf/cav/DehnertJK017}, MRMC~\cite{DBLP:journals/pe/KatoenZHHJ11}, CADP 2011~\cite{DBLP:journals/sttt/GaravelLMS13}, PROPhESY~\cite{DBLP:conf/cav/DehnertJJCVBKA15} and  \textsc{IscasMc}~\cite{DBLP:conf/fm/HahnLSTZ14}. 

In this paper we consider \emph{parametric} discrete time Markov chains (pDTMCs), whose transition probabilities are not required to be constants, but can depend on a set of parameters. 
For this type of models, the value of the analyzed property can be described as a \emph{function} of the parameters, mapping either to truth values or to numbers.
In many cases, these functions are \emph{rational functions}, that is, fractions of co-prime polynomials.
The exact rational function is commonly challenging to compute as it often involves polynomials with very high degree~\cite{DBLP:journals/iandc/BaierHHJKK20}.

\medskip\noindent\textbf{Contribution of the paper.}
In this work, we propose an alternative approach to obtain the function $f_{\lsf}$ describing the value of the analyzed property $\lsf$ in the given pDTMC. 
The main idea is to learn a polynomial with low degree to approximate the actual function $f_{\lsf}$ in pDTMC and pDTMRM.
Exploiting the scenario approach~\cite{DBLP:journals/tac/CalafioreC06,DBLP:journals/arc/CampiGP09}, we obtain an \emph{approximating function} with probably approximately correct (PAC) guarantee, i.e., with high confidence $1 - \significanceLevel$, the probability that the approximation is within an error margin $\margin$ is at least $1 - \errorRate$.
The obtained function can then be used in synthesizing parameters and analyzing properties of $f_{\lsf}$. 

Experimental results we performed show that our prototype \pacmc can solve more properties under the same conditions than the state-of-the-art verification tools \storm and \prism, and provide PAC approximations with statistical guarantee. 
We can use the PAC approximation to check the safe region of the parameter space and other properties, such as whether the probabilities of satisfying the given PRCTL formulas remain within a certain range on changing parameters' values.
Also, compared with the Taylor expansion to approximate the actual function $f_{\lsf}$, PAC approximation can approximate $f_{\lsf}$ more accurately: 
the $L_{2}$ norm of the distance between $f_{\lsf}$ and the PAC approximation can be even hundreds of times smaller than that of the Taylor expansion.
We also extend the scenario approach to reward properties; 
we use PAC approximation to estimate the lower bound of the expectation of $f_{\lsf}$ with respect to a probability measure $P$ over the domain $X$ of the parameters: the quadratic polynomial can estimate the expectation $\int_{X} f_{\lsf}(\parameters)\,dP(\parameters)$ accurately while being easy to compute.

\medskip\noindent\textbf{Related work.}
Model checking of parametric Markov models is not a new area and a number of related works exist, each with different strengths and weaknesses.
In the following, we demarcate our work from the existing ones.

Daws has devised a language-theoretic approach to solve the reachability problem in parametric Markov chains~\cite{DBLP:conf/ictac/Daws04}.
In this approach, the model is viewed as a finite automaton. 
Based on the state elimination approach~\cite{DBLP:books/daglib/0016921}, the regular expression describing the language of such an automaton is computed.
In a postprocessing step, this regular expression is transformed into a rational function over the parameters of the model. 

In a following work~\cite{DBLP:journals/sttt/HahnHZ11}, the method has been improved by intertwining the state elimination and the computation of the rational function.
This improved algorithm has been implemented in the tool PARAM~\cite{DBLP:conf/cav/HahnHWZ10}.
PARAM also supports bounded reachability, relying on matrix-vector multiplication with rational function entries, and reachability rewards~\cite{key0126090m,DubinsS65}.
For the latter, the model is extended with parametric rewards assigned to both states and transitions.
Thereby, one can consider the expected accumulated reward until a given set of states is reached.
All these works~\cite{DBLP:journals/sttt/HahnHZ11,DBLP:conf/cav/HahnHWZ10} compute the precise rational function that describes the property of interest.
Unfortunately, it is challenging to evaluate it, due to the large coefficients and high exponents.
Moreover, the works discussed above do not consider properties specified by a temporal logic.

Several improvements have been proposed in later works.
Jansen et~al.~\cite{DBLP:conf/qest/JansenCVWAKB14}
perform the state elimination in a more systematic order, often leading to better performance in practice.
The work~\cite{DBLP:conf/atva/GainerHS18} uses arithmetic circuits, which are DAG-like structures, to represent such rational functions. 
A further work~\cite{DBLP:conf/nfm/HahnHZ11} follows a related approach to solve (potentially nested) PRCTL formulas for Markov decision processes:
the state-space is divided into hyperrectangles, and one has to show that a particular decision is optimal for a whole region. 
The work~\cite{DBLP:journals/iandc/BaierHHJKK20} improves the computation of the rational function by means of a fraction-free Gaussian elimination; 
the experimental evaluation confirms its effectiveness.
There are also methods for checking parametric continuous time Markov chains~\cite{DBLP:conf/gi/Han09}, by using a scenario approach~\cite{DBLP:conf/cav/BadingsJJSV22} or by being based on Gaussian processes~\cite{DBLP:conf/tacas/BortolussiS18,DBLP:journals/iandc/BortolussiMS16}.

The scenario approach was first introduced in~\cite{DBLP:journals/mp/CalafioreC05}, based on constraint sampling to deal with uncertainty in optimization. 
The works~\cite{DBLP:journals/tac/CalafioreC06,DBLP:journals/arc/CampiGP09,DBLP:journals/siamjo/CareGC15} study a probabilistic solution framework for robust properties. 
The work~\cite{DBLP:journals/siamjo/CareGC15} considers the min-max sample-based uncertain
convex optimization problems in the presence of stochastic uncertainty, which is called the “min-max scenario program”.
The work~\cite{DBLP:journals/tac/MargellosGL14} proposes a method to solve chance constrained optimization problems lying between robust optimization and scenario approach, which does not require prior knowledge of the probability distribution of the parameters.
The work~\cite{DBLP:journals/jota/CampiG11} based on~\cite{DBLP:journals/mp/CalafioreC05, DBLP:journals/tac/CalafioreC06} allows violating some of the sampled constraints
in order to improve the optimization value, and the work~\cite{DBLP:journals/automatica/VayanosKR12} expands the scenario optimization problem to multi-stage problems. 
Recently, the scenario approach has been applied to verify safety properties of black-box continuous time dynamical systems~\cite{xue2020pac} and the robustness of neural networks~\cite{DBLP:conf/icse/LiYHS0Z22}.

The most related to our work is~\cite{DBLP:journals/sttt/BadingsCJJKT22}, which also applies the scenario approach for analyzing parametric Markov chains and Markov decision processes. 
The main difference with our work is that in~\cite{DBLP:journals/sttt/BadingsCJJKT22}, the authors compute the probability that the instances of the parametric MDP satisfy a given property $\lsf$ with PAC-guarantee, by sampling the parameter values according to some unknown distribution; 
each MDP instance is then checked independently with respect to $\lsf$.
Instead, our work targets at computing an approximation of the complicated  function $f_{\lsf}$ --such as the one corresponding to the reachability probability $\lsf$-- depending on the parameters; 
we obtain this by sampling instances of the parameter values to compute the value of $f_{\lsf}$ on them and then synthesize the approximation with a certain confidence. 
Our framework can  bound the error between the actual function and the approximation we compute.
Moreover, as a side result, our PAC approximations can be used for visualizing the reachability probability, finding counterexamples, and analyzing properties that the original functions may satisfy.
Extending our approach to parametric MDPs seems feasible, as long as we treat the MDP strategy as in~\cite{DBLP:journals/sttt/BadingsCJJKT22}, i.e., we allow the strategy to change for the different MDP instances; 
that is, the strategy can also depend on the parametric values while solving the instantiated MDP with respect to $\lsf$.
We leave the formalization of the extension to parametric MDPs to future work.

\medskip\noindent\textbf{Organization of the paper.}
After giving in Sect.~\ref{sec:preliminaries} some preliminaries, models, and logic we use in this paper, in Sect.~\ref{sec:synthesisPACfunctions} we present our PAC-based model checking approach; 
we evaluate it empirically in Sect.~\ref{sec:experiments} before concluding the paper in Sect.~\ref{sec:conclusion} with some final remarks.

Due to space constraints, non-trivial proofs are provided in the appendix.

\section{Preliminaries}
\label{sec:preliminaries}

In this section, we first recall DTMCs, a well-know probabilistic model (see, e.g.,~\cite{DBLP:books/BaierK08PrinciplesMC}), reward structures, the probabilistic logic PRCTL we adopt to express properties on them, and then consider their extension with parameters.

\subsection{Probabilistic Models}
\label{ssec:probabilisticModels}

\begin{definition}
\label{def:dtmc}
    Given a finite set of atomic propositions $\AP$, a \emph{(labelled) discrete time Markov chain} (DTMC) $\dtmc$ is a tuple $\dtmc = (\mstates, \minit, \mtransitions, \mlabelling)$ where
        $\mstates$ is a finite set of \emph{states};
        $\minit \in \mstates$ is the \emph{initial state};
        $\mtransitions \colon \mstates \times \mstates \to \interval{0}{1}$ is a \emph{transition function} such that for each $s \in \mstates$, we have $\sum_{s' \in \mstates} \mtransitions(s, s') = 1$; 
        and
        $\mlabelling \colon \mstates \to 2^{\AP}$ is a \emph{labelling function}.
\end{definition}
The \emph{underlying graph} of a DTMC $\dtmc = (\mstates, \minit, \mtransitions, \mlabelling)$ is a directed graph $\langle V, E \rangle$ with $V = \mstates$ as vertexes and $E = \setcond{(s, s') \in \mstates \times \mstates}{\mtransitions(s, s') > 0}$ as edges.

\begin{figure}[t]
    \centering
    \includegraphics{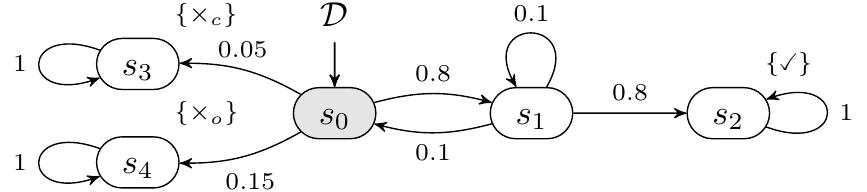}
    \caption{An example of discrete time Markov chain}
    \label{fig:dtmc}
\end{figure}
As an example of DTMC, consider the DTMC $\dtmc$ shown in Fig.~\ref{fig:dtmc}.
$\dtmc$ has 5 states (from $s_{0}$ to $s_{4}$), with $s_{0}$ being the initial one (marked with the gray background and the small incoming arrow);
transitions with probability larger than $0$ are depicted as arrows, so for example we have $\mtransitions(s_{0}, s_{1}) = 0.8 > 0$, while the labels assigned to each state are shown on the top-right corner of the state itself, e.g., $\mlabelling(s_{2}) = \setnocond{\success}$ while $\mlabelling(s_{0}) = \emptyset$.

DTMCs can be equipped with reward structures that assign values to states and transitions;
such reward structures can be used to count the number of transitions taken so far or to attach ``costs'' or ``gains'' to the DTMC.
\begin{definition}
\label{def:DTMRM}
    A \emph{discrete time Markov reward model} (DTMRM) $\dtmrm$ is a pair $\dtmrm = (\dtmc, \mreward)$ where $\dtmc$ is a DTMC and $\mreward \colon \mstates \cup (\mstates \times \mstates) \to \posreals$ is a \emph{reward function}.
\end{definition}
For example, the reward function $\mreward[c]$ defined as $\mreward[c](s) = 0$ and $\mreward[c](s, s') = 1$ for each $s, s' \in \mstates$ allows us to ``count'' the number of steps taken by the DTMC.

Let $\dtmc$ be a DTMC;
a \emph{path} $\mpath$ of $\dtmc$ is a (possibly infinite) sequence of states $\mpath = s_{0} s_{1} s_{2} \cdots$ such that for each meaningful $i \in \naturals$, we have $\mtransitions(s_{i}, s_{i+1}) > 0$; 
we write $\mpath_{i}$ to indicate the state $s_{i}$.
We let $\mpathsfin{\dtmc}$ and $\mpaths{\dtmc}$ denote the sets of all finite and infinite paths of $\dtmc$, respectively.
Given a finite path $\mpath = s_{0} s_{1} s_{2} \cdots s_{n}$, we denote by $\size{\mpath}$ the number of states $n + 1$ of $\mpath$.

Given a finite path $\mpath$, the \emph{cylinder} of $\mpath$, denoted by $\cylinder{\mpath}$, is the set of infinite paths having $\mpath$ as prefix. 
Given a state $s \in \mstates$, we define the probability of the cylinder set of $\mpath$ by $\prob_{s}^{\dtmc}\big(\cylinder{\mpath}\big) = \dirac{s}(\mpath_{0}) \cdot \prod_{i = 0}^{\size{\mpath} - 1} \mtransitions(\mpath_{i}, \mpath_{i+1})$, where $\dirac{s}(s')$ is $1$ if $s' = s$ and $0$ otherwise. 
For a given DTMC $\dtmc$, $\prob_{s}^{\dtmc}$ can be uniquely extended to a probability measure over the $\sigma$-algebra generated by all cylinder sets; 
see~\cite{DBLP:books/BaierK08PrinciplesMC} for more details.
In the remainder of the paper, we might just write $\prob_{s}$ instead of $\prob_{s}^{\dtmc}$ when $\dtmc$ is clear from the context.

Given a DTMRM $\dtmrm = (\dtmc, \mreward)$, similarly to $\prob_{s}^{\dtmc}$ we can define the \emph{expected cumulative reward} $\exprew_{s}^{\dtmrm}$ as follows (cf.~\cite{DBLP:books/BaierK08PrinciplesMC,DBLP:journals/sttt/HahnHZ11,DBLP:conf/sfm/KwiatkowskaNP07}):
given set $T \subseteq \mstates$ of states, $\exprew_{s}^{\dtmrm}(T)$ is the expectation of the random variable $X^{T} \colon \mpaths{\dtmc} \to \posreals$ with respect to the probability measure $\prob_{s}^{\dtmc}$ defined as follows:
\[
    X^{T}(\mpath) = 
    \begin{cases}
        0 & \text{if $\mpath_{0} \in T$,} \\
        \infty & \text{if $\mpath_{i} \notin T$ for each $i \in \naturals$,} \\
        \sum_{i = 0}^{\min\setcond{n \in \naturals}{\mpath_{n} \in T} - 1} \mreward(\mpath_{i}) + \mreward(\mpath_{i}, \mpath_{i + 1}) & \text{otherwise.}
    \end{cases}
\]

\subsection{Probabilistic Reward Logic PRCTL}
\label{ssec:logicPRCTL}

To express properties about probabilistic models with rewards, we use formulas from PRCTL, the Probabilistic Reward CTL logic~\cite{DBLP:conf/formats/AndovaHK03}, that extends PCTL~\cite{DBLP:journals/fac/HanssonJ94,DBLP:conf/fsttcs/BiancoA95} with rewards.
Such formulas are constructed according to the following grammar, where $\lsf$ is a \emph{state formula} and $\lpf$ is a \emph{path formula}:
\begin{align*}
    \lsf & ::= a \mid \lnot \lsf \mid \lsf \land \lsf \mid \lP{\mathord{\bowtie p}}(\lpf) \mid \lER{\mathord{\bowtie r}}(\lF \lsf) \\
    \lpf & ::= \lX \lsf \mid \lsf \lU \lsf \mid \lsf \lBU{k} \lsf
\end{align*}
where $a \in \AP$, 
$\mathord{\bowtie} \in \setnocond{\mathord{<}, \mathord{\leq}, \mathord{\geq}, \mathord{>}}$, 
$p \in \interval{0}{1}$, 
$r \in \posreals$,
and $k \in \naturals$.
We use freely the usually derived operators, like $\lsf_{1} \lor \lsf_{2} = \lnot(\lnot \lsf_{1} \land \lnot\lsf_{2})$, $\ltrue = a \lor \neg a$, and $\lF \lsf = \ltrue \lU \lsf$.
The PCTL logic is just PRCTL without the $\lER{\mathord{\bowtie r}}(\lF \lsf)$ operator.

The semantics of a state formula $\lsf$ and of a path formula $\lpf$ is given with respect to a state $s$ and a path $\mpath$ of a DTMRM $\dtmrm = (\dtmc, \mreward)$, respectively.
The semantics is standard for all Boolean and temporal operators (see, e.g.,~\cite{DBLP:books/BaierK08PrinciplesMC,DBLP:reference/mc/2018}); 
for the $\lP{\mathord{\bowtie} p}$ operator, it is defined as $s \models \lP{\mathord{\bowtie} p}(\lpf)$ iff $\prob_{s}(\setcond{\mpath \in \mpaths{\dtmc}}{\mpath \models \lpf}) \bowtie p$ and, similarly, $s \models \lER{\mathord{\bowtie} r}(\lpf)$ iff $\exprew_{s}(\setcond{\mpath \in \mpaths{\dtmc}}{\mpath \models \lpf}) \bowtie r$.

With some abuse of notation, we write $\dtmrm \models \lsf$ if $\minit \models \lsf$; 
we also consider $\lP{\mathord{=}?}(\lpf)$ and  $\lER{\mathord{=}?}(\lpf)$ as PRCTL formulas, asking to compute the probability (resp.\@ expected reward) of satisfying $\lpf$ in the initial state $\minit$ of $\dtmrm$, i.e., to compute the value $\prob_{\minit}(\setcond{\mpath \in \mpaths{\dtmc}}{\mpath \models \lpf})$ (resp.\@ $\exprew_{\minit}(\setcond{\mpath \in \mpaths{\dtmc}}{\mpath \models \lpf})$).

Consider the DTMC $\dtmc$ shown in Fig.~\ref{fig:dtmc}. 
As an example of PRCTL formula, there is $\lP{\mathord{=}?}(\lF \success)$ that asks to compute the probability of eventually reaching a state labelled with $\success$, for which we have $\lP{\mathord{=}?}(\lF \success) \approx 0.78$.

\subsection{Parametric Models}
\label{ssec:parametricModels}

We now recall the definition of parametric models from~\cite{DBLP:conf/nfm/HahnHZ11,DBLP:journals/sttt/HahnHZ11}.
Given a finite set of \emph{variables}, or \emph{parameters}, $\parametersSet = \setnocond{\parameter_{1}, \dotsc, \parameter_{n}}$, let $\parameters = (\parameter_{1}, \dotsc, \parameter_{n})$ denote the vector of parameters and $\range \colon \parametersSet \to \reals$ be the function assigning to each parameter $\parameter \in \parametersSet$ its closed interval $\range(\parameter) = \interval{L_{\parameter}}{U_{\parameter}} \subseteq \reals$ of valid values.
Given the field $\polynomialsSet$ of the polynomials with variables $\parametersSet$, a \emph{rational function} $f$ is a fraction $f(\parameters) = \frac{g_{1}(\parameters)}{g_{2}(\parameters)}$ where $g_{1}, g_{2} \in \polynomialsSet$;
let $\ratfunsSet$ denote the set of rational functions.
An \emph{evaluation} $\evaluation$ is a function $\evaluation \colon \parametersSet \to \reals$ such that for each $\parameter \in \parametersSet$, $\evaluation(\parameter) \in \range(\parameter)$.
Given $f = \frac{g_{1}}{g_{2}} \in \ratfunsSet$ and an evaluation $\evaluation$, we denote by $\evaluate{f}$ the rational number $f(\evaluation(\parameters)) = f(\evaluation(\parameter_{1}), \dotsc, \evaluation(\parameter_{n}))$;
we assume that $\evaluate{f}$ is well defined for each evaluation $\evaluation$, that is, $\evaluate{g_{2}} \neq 0$ for each evaluation $\evaluation$.

\begin{definition}
\label{def:pdtmc}
    Given a finite set of parameters $\parametersSet$, a \emph{parametric discrete time Markov chain} (pDTMC) $\pdtmc$ with parameters $\parametersSet$ is a tuple $\pdtmc = (\mstates, \minit, \mtransitions, \mlabelling)$ where $\mstates$, $\minit$, and $\mlabelling$ are as in Def.~\ref{def:dtmc}, while $\mtransitions \colon \mstates \times \mstates \to \ratfunsSet$.
\end{definition}

\begin{definition}
\label{def:pdtmcInstantiation}
    Given a pDTMC $\pdtmc = (\mstates, \minit, \mtransitions, \mlabelling)$, an evaluation $\evaluation$ \emph{induces} the DTMC $\evaluate{\dtmc} = (\mstates, \minit, \mtransitions_{\evaluation}, \mlabelling)$, provided that $\mtransitions_{\evaluation}(s, s') = \evaluate{\mtransitions(s, s')}$ for each $s, s' \in \mstates$ satisfies the conditions given in Def.~\ref{def:dtmc}.
\end{definition}

The extension to parametric DTMRMs (pDTMRMs) is trivial:
a pDTMRM $\pdtmrm$ is just a pair $\pdtmrm = (\pdtmc, \mreward)$ where $\pdtmc$ is a pDTMC and $\mreward$ is a reward function.

To simplify the presentation and ensure that the underlying graph of $\pdtmc$ does not depend on the actual evaluation, we make the following assumption:
\begin{assumption}[cf.~\cite{DBLP:conf/nfm/HahnHZ11}]
\label{asmt:sameGraph}
    Given a pDTMC $\pdtmc$, for each pair of evaluations $\evaluation_{1}$ and $\evaluation_{2}$, for the induced DTMCs $\evaluate[\evaluation_{1}]{\pdtmc}$ and $\evaluate[\evaluation_{2}]{\pdtmc}$ we have that for each $s, s' \in \mstates$, it holds that $\mtransitions_{\evaluation_{1}}(s, s') = 0$ if and only if $\mtransitions_{\evaluation_{2}}(s, s') = 0$.
\end{assumption}
By this assumption, either a state $s'$ has probability $0$ to be reached from $s$ (i.e., it is not reachable) independently of the evaluation, or it is always reachable, with possibly different probability values.

\begin{figure}[t]
    \centering
    \includegraphics{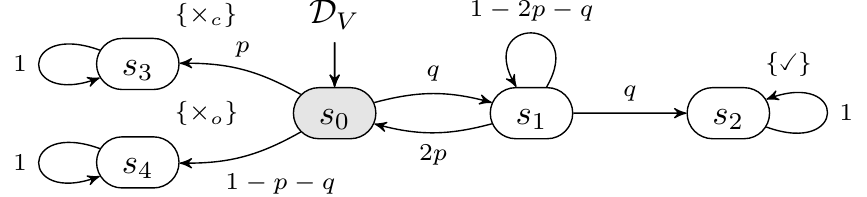}
    \caption{An example of parametric discrete time Markov chain}
    \label{fig:pdtmc}
\end{figure}
As an example of pDTMC, consider the model shown in Fig.~\ref{fig:pdtmc}:
now, $p$ and $q$ are the parameters, with e.g.\@ $\range(p) = \interval{0.01}{0.09}$ and $\range(q) = \interval{0.25}{0.8}$.
One evaluation is $\evaluation(p) = 0.05$ and $\evaluation(q) = 0.8$, which gives us the DTMC shown in Fig.~\ref{fig:dtmc}.
The rational function corresponding to the PRCTL formula $\lP{\mathord{=}?}(\lF \success)$ is $\frac{q^{2}}{q + 2p - 2pq} \approx 0.78$ when evaluated on $\evaluation$, as one would expect.

\section{Probably Approximately Correct Function Synthesis}
\label{sec:synthesisPACfunctions}

In this section, we show how to approximate the actual functions with low-degree polynomials, while providing a statistical PAC guarantee on the closeness of the approximating polynomial with the approximated function.

\subsection{Probably Approximately Correct Models}
\label{ssec:PACmodels}

Our  method provides a PAC approximation, with respect to the given significance level $\significanceLevel$ and error rate $\errorRate$. 
First, we define the PAC approximation of a generic function $f$ as follows.
\begin{definition}
\label{def:PACmodel}
    Given a set of $n$ variables $\parametersSet$, their domain $X = \prod_{i = 1}^{n} \range(\parameter_{i})$, and a function $f \colon X \to \reals$, let $P$ be a probability measure over $X$, $\margin \in \posreals$ be a margin to measure the approximation error, and $\errorRate$, $\significanceLevel \in (0,1]$ be an error rate and a significance level, respectively.
    
    We say that the polynomial $\approxfun{f} \in \polynomialsSet$ is a PAC approximation of $f$ with $(\errorRate, \significanceLevel)$-guarantee if, with confidence $1 - \significanceLevel$, the following condition holds:
    \[
        P(\abs{\approxfun{f}(\parameters) - f(\parameters)} \leq \margin) \geq 1 - \errorRate.
    \]
\end{definition}

In this work, we assume that $P$ is the uniform distribution on the domain $X = \prod_{i = 1}^{n} \range(\parameter_{i})$ unless otherwise specified. 
Intuitively, our aim is to make the PAC approximation $\approxfun{f}$ as close as possible to $f$, so we introduce the margin $\margin$ to describe how close the two functions are. 
The two statistical parameters $\significanceLevel$ and $\errorRate$ are the significance level and error rate, respectively; 
they are used to measure how often the difference between $\approxfun{f}$ and $f$ respects the threshold $\margin$, so we can adjust these parameters to change the quality of the approximation.

\subsection{The Scenario Approach}
\label{ssec:scenarioapproach}

PAC approximation is inspired by the scenario approach proposed in~\cite{DBLP:journals/tac/CalafioreC06,DBLP:journals/arc/CampiGP09}. 
We consider the following class of convex optimization problems: 
\begin{equation}
\label{eq:optProblemExact}
    \begin{split}
        \min \limits_{\theta \in \Theta \subseteq \reals^{m}} & \quad \vec{a}^{T} \theta \\
        \mathrm{s.t.} & \quad f_{\omega}(\theta) \leq 0 \qquad  \forall \omega \in \Omega
    \end{split}
\end{equation}
under the assumption that $f_{\omega} \colon \Theta \to \reals$ is a convex function of $\theta \in \Theta$ for every $\omega \in \Omega$.
Moreover, we assume that the domains $\Theta$ and $\Omega$ are convex and closed.

The main obstacle on solving the optimization problem~\eqref{eq:optProblemExact} is that in general it has infinitely many constraints, due to the convexity of $\Omega$. 
Instead of solving the problem~\eqref{eq:optProblemExact}, it was proposed in~\cite{DBLP:journals/tac/CalafioreC06} to use finitely many sampled points  that provide statistical guarantee on the error rate made with respect to the exact solution of the problem~\eqref{eq:optProblemExact}, which is formalized as follows.
\begin{definition}
\label{def:scenarioApproach}
    Given a convex and closed set $\Omega$ and a constant $l \in \naturals$, let $P$ be a probability measure over $\Omega$ and $\omega_{1}, \dotsc, \omega_{l}$ be $l$ independent identically distributed samples taken from $\Omega$ according to $P$.
    The \emph{scenario design problem} corresponding to the problem~\eqref{eq:optProblemExact} is defined as
    \begin{equation}
    \label{eq:optProblemPAC}
        \begin{split}
            \min\limits_{\theta \in \Theta \subseteq \reals^{m}} & \quad \vec{a}^{T} \theta \\
            \mathrm{s.t.} & \quad \bigwedge_{i=1}^{l} f_{\omega_{i}}(\theta) \leq 0 \qquad \omega_{i} \in \Omega
        \end{split}
    \end{equation}
\end{definition}

The optimization problem~\eqref{eq:optProblemPAC} can be seen as the relaxation of the optimization problem~\eqref{eq:optProblemExact}, since we do not require that the solution $\theta^{*}_{l}$ of the problem~\eqref{eq:optProblemPAC} satisfies all constraints $f_{\omega}(\theta^{*}_{l}) \leq 0$ for each $\omega \in \Omega$, but only the constraints corresponding to the $l$ samples from $\Omega$ according to $P$. 
The issue now is how to provide enough guarantee that the optimal solution $\theta^{*}_{l}$ of~\eqref{eq:optProblemPAC} also satisfies the other constraints $f_{\omega}(\theta) \leq 0$ with $\omega \in \Omega \setminus \setnocond{\omega_{i}}_{i=1}^{l}$ we have not considered.

To answer this question, an \emph{error rate} $\errorRate$ is introduced to bound the probability that the solution $\theta^{*}_{l}$ violates the constraints of problem~\eqref{eq:optProblemExact}; 
we denote by $\significanceLevel$ the \textit{significance level} with respect to the random sampling solution algorithm.
Statistics theory ensures that as the number of samples $l$ increases, the probability that the optimal solution of the optimization problem~\eqref{eq:optProblemPAC} violates the other unseen constraints will tend to zero rapidly.
The minimal number of sampled points $l$ is related to the error rate $\errorRate \in (0, 1]$ and significance level $\significanceLevel \in (0, 1]$ by: 
\begin{theorem}[\kern-0.9ex\cite{DBLP:journals/arc/CampiGP09}]
\label{thm:PACnumberOfSamples}
    If the optimization problem~\eqref{eq:optProblemPAC} is feasible and has a unique optimal solution $\theta^{*}_{l}$, then $P(f_{\omega}(\theta^{*}_{l})>0)<\errorRate$, with confidence at least $1 - \significanceLevel$, provided that the number of constraints $l$ satisfies
    \[
        l \geq \frac{2}{\errorRate} \cdot \Big(\ln \frac{1}{\significanceLevel} + m \Big),
    \]
    where $m$ is the dimension of $\theta$, that is, $\theta \in \Theta \subseteq \reals^{m}$, $\errorRate$ and $\significanceLevel$ are the given error rate and significance level, respectively.
\end{theorem}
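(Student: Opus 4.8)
The plan is to read the confidence statement measure-theoretically and reduce everything to a combinatorial estimate on the random sample. Write $V(\theta) = P(\setcond{\omega \in \Omega}{f_{\omega}(\theta) > 0})$ for the \emph{violation probability} of a candidate $\theta$, so that ``$P(f_{\omega}(\theta^{*}_{l}) > 0) < \errorRate$ with confidence at least $1 - \significanceLevel$'' is exactly the assertion that, under the product measure $P^{l}$ governing the i.i.d.\ draw $(\omega_{1}, \dotsc, \omega_{l})$, the bad event $\setnocond{V(\theta^{*}_{l}) \geq \errorRate}$ satisfies $P^{l}(\setnocond{V(\theta^{*}_{l}) \geq \errorRate}) \leq \significanceLevel$. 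Since $\theta^{*}_{l}$ is a measurable function of the sample, $V(\theta^{*}_{l})$ is a genuine random variable and this reformulation is well posed. I would first establish the combinatorial bound $P^{l}(V(\theta^{*}_{l}) \geq \errorRate) \leq \binom{l}{m}(1 - \errorRate)^{l-m}$ and then choose $l$ to push the right-hand side below $\significanceLevel$.

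The backbone is the notion of a \emph{support constraint}: an index $i$ whose removal strictly lowers the optimum $\vec{a}^{T}\theta^{*}_{l}$. Using the convexity of $\Theta$ and of each $f_{\omega_{i}}$ together with Helly's theorem on $\reals^{m}$, I would prove that the scenario program~\eqref{eq:optProblemPAC} has at most $m$ support constraints and that $\theta^{*}_{l}$ coincides with the optimizer of the subprogram retaining only those constraints. The uniqueness hypothesis is exactly what rules out degenerate ties, so that the support set is well defined and, after a fixed deterministic tie-breaking rule, may be taken to have cardinality exactly $m$ (padding with arbitrary indices if fewer). I expect this to be the main obstacle: the Helly argument must be set up carefully on the sublevel-set geometry of the constraints, and the non-degeneracy bookkeeping needed to pin the support set to size $m$ is the delicate part on which the whole counting argument rests.

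With the lemma in hand I would run the subsample union-bound argument. For each index set $I \subseteq \setnocond{1, \dotsc, l}$ with $\size{I} = m$, let $\theta^{*}_{I}$ be the optimizer that uses only the constraints in $I$, and let $E_{I}$ be the event that the support constraints of the full program are exactly $I$ (so $\theta^{*}_{I} = \theta^{*}_{l}$) and simultaneously $V(\theta^{*}_{I}) \geq \errorRate$. The lemma gives $\setnocond{V(\theta^{*}_{l}) \geq \errorRate} \subseteq \bigcup_{\size{I}=m} E_{I}$, and since the samples are i.i.d.\ the events $E_{I}$ are equiprobable, whence $P^{l}(V(\theta^{*}_{l}) \geq \errorRate) \leq \binom{l}{m}\, P^{l}(E_{\setnocond{1,\dotsc,m}})$. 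Conditioning on $\omega_{1}, \dotsc, \omega_{m}$, which determine $\theta^{*}_{I}$, the remaining $l - m$ independent constraints must all be satisfied by $\theta^{*}_{I}$ on $E_{I}$, and each is satisfied with probability $1 - V(\theta^{*}_{I}) \leq 1 - \errorRate$; integrating over the conditioning yields $P^{l}(V(\theta^{*}_{l}) \geq \errorRate) \leq \binom{l}{m}(1 - \errorRate)^{l-m}$.

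It remains to verify that the stated sample size forces $\binom{l}{m}(1 - \errorRate)^{l-m} \leq \significanceLevel$. I would use $(1 - \errorRate)^{l-m} \leq e^{-\errorRate(l-m)}$, take logarithms to reduce the goal to $\ln\binom{l}{m} + \ln\frac{1}{\significanceLevel} \leq \errorRate(l - m)$, bound $\binom{l}{m}$ by a suitable polynomial-in-$l$ factor, and check that $l \geq \frac{2}{\errorRate}\big(\ln\frac{1}{\significanceLevel} + m\big)$ makes the inequality hold: the factor $2$ and the additive $m$ absorb the binomial term while the $\ln\frac{1}{\significanceLevel}$ term carries the confidence. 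This final step is routine calculus once the combinatorial bound is available, so the conceptual weight of the argument sits entirely in the support-constraint lemma and the exchangeability step above.
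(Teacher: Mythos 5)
First, note that the paper itself does not prove this theorem: it is imported verbatim from the cited survey of Campi, Garatti and Prandini, so there is no in-paper proof to compare against and your proposal has to be judged against the actual literature. Your architecture --- violation probability $V(\theta)=P(f_{\omega}(\theta)>0)$, the support-constraint lemma bounding the number of support constraints by $m$ via convexity (this is Calafiore--Campi's Helly-type argument, and your instinct that the non-degeneracy bookkeeping is the delicate part is right), and the exchangeability/union-bound step yielding $P^{l}(V(\theta^{*}_{l})\geq\errorRate)\leq\binom{l}{m}(1-\errorRate)^{l-m}$ --- is a faithful reconstruction of the \emph{original} 2006 scenario bound.

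The genuine gap is in the step you dismiss as routine calculus: the bound $\binom{l}{m}(1-\errorRate)^{l-m}$ is too weak to yield the stated sample size $l\geq\frac{2}{\errorRate}\bigl(\ln\frac{1}{\significanceLevel}+m\bigr)$. After taking logarithms you must absorb $\ln\binom{l}{m}\approx m\ln\frac{el}{m}$ into $\errorRate(l-m)$; since $l\sim\frac{1}{\errorRate}$, this term grows like $m\ln\frac{1}{\errorRate}$ and is \emph{not} dominated by $\errorRate l=2\bigl(\ln\frac{1}{\significanceLevel}+m\bigr)$ for small $\errorRate$. Concretely, with $m=1$, $\significanceLevel=e^{-1}$, $\errorRate=0.01$ the theorem prescribes $l=400$, yet $\binom{400}{1}(0.99)^{399}\approx 7.3\gg e^{-1}$. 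This is exactly why the original Calafiore--Campi corollary carries an extra additive term of order $\frac{2m}{\errorRate}\ln\frac{2}{\errorRate}$ in its sample complexity. The clean formula in the statement rests on the sharper Campi--Garatti result $P^{l}(V(\theta^{*}_{l})>\errorRate)\leq\sum_{i=0}^{m-1}\binom{l}{i}\errorRate^{i}(1-\errorRate)^{l-i}$, whose proof is not a union bound but goes through the exact Beta distribution of $V(\theta^{*}_{l})$ for fully supported problems together with a domination argument for the general case; from that binomial lower tail, a Chernoff bound gives $(l\errorRate-m)^{2}\geq 2l\errorRate\ln\frac{1}{\significanceLevel}$ precisely when $l\errorRate\geq 2\bigl(\ln\frac{1}{\significanceLevel}+m\bigr)$, with equality of the slack to $m^{2}\geq 0$. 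So you would either need to replace your combinatorial lemma by this refined one, or settle for proving a weaker sample-size bound than the one stated.
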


In Theorem~\ref{thm:PACnumberOfSamples}, we assume that the optimization problem~\eqref{eq:optProblemPAC} has a unique optimal solution $\theta^{*}_{l}$.
This is not a restriction in general, since for multiple optimal solutions we can just use the Tie-break
rule~\cite{DBLP:journals/tac/CalafioreC06} to get a unique optimal solution.

\subsection{Synthesizing Parametric Functions}
\label{ssec:ourMethod}

We now  apply the above scenario approach to the synthesis of the parametric functions for  pDTMRMs. 
Given a pDTMRM $\pdtmrm = (\pdtmc, \mreward)$ with $\pdtmc = (\mstates, \minit, \mtransitions, \mlabelling)$, let $\parameters$ denote the vector of parameters $(\parameter_{1}, \dotsc, \parameter_{n})$ of $\pdtmc$. 
For a PRCTL state formula $\lsf$, the analytic function $f_{\lsf}(\parameters)$, representing the probability or the expected reward of the paths satisfying $\lsf$ in the pDTMRM $\pdtmrm$, can be a rational function with a very complicated form~\cite{DBLP:conf/cav/HahnHWZ10,DBLP:journals/sttt/HahnHZ11}. 
Our aim is to approximate the function $f_{\lsf}(\parameters)$ with some low degree polynomial $\approxfun{f}_{\lsf}(\parameters)$, such as a quadratic polynomial $\approxfun{f}_{\lsf}(\parameters) = \vec{c}_{0} + \vec{c}_{1} \cdot \parameters + \vec{c}_{2} \cdot \parameters^{2} = (\vec{c}_{0}, \vec{c}_{1}, \vec{c}_{2}) \cdot (1, \parameters, \parameters^{2})^{T}$. 

The reason why we choose a polynomial $\approxfun{f}_{\lsf}(\parameters)$ with low degree to fit the rational function $f_{\lsf}(\parameters)$ is that the graph of polynomials $\approxfun{f}_{\lsf}(\parameters)$ and original functions $f_{\lsf}(\parameters)$ are both surfaces and the polynomial $\approxfun{f}_{\lsf}(\parameters)$ can approximate the rational function $f_{\lsf}(\parameters)$ well if we synthesize appropriately the coefficients $\vec{c} = (\vec{c}_{0}, \vec{c}_{1}, \vec{c}_{2})$ of the polynomial by learning them. 

It is worth mentioning that no matter how complicated the function $f_{\lsf}(\parameters)$ is (it could also be any kind of function other than rational functions), we can still obtain an approximating polynomial $\approxfun{f}_{\lsf}(\parameters)$ of $f_{\lsf}(\parameters)$ by solving an optimization problem, and utilize it to analyze various properties the original function $f_{\lsf}(\parameters)$ may satisfy. 
In the remainder of this section, we show how we synthesize such coefficients $\vec{c}$, and thus the polynomial; 
we first introduce some notations.

Given the vector of parameters $\parameters$ and a degree $d \in \naturals$, we denote by $\parameters^{d}$ the vector of monomials $\parameters^{d} = (\parameters^{\vec{\alpha}})_{\norm[1]{\vec{\alpha}} = d}$, where each monomial $\parameters^{\vec{\alpha}}$ is defined as $\parameters^{\vec{\alpha}} = \parameter_{1}^{\alpha_{1}} \parameter_{2}^{\alpha_{2}} \cdots \parameter_{n}^{\alpha_{n}}$, with $\vec{\alpha} = (\alpha_{1}, \dotsc, \alpha_{n}) \in \naturals^{n}$ and $\norm[1]{\vec{\alpha}} = \sum_{i = 1}^{n} \alpha_{i}$.
Then, we associate a coefficient $\vec{c}_{i}$ to each of the monomials in the vector $(\parameters^{i})_{i = 0}^{d}$, obtaining the PAC approximation $\approxfun{f}(\parameters) = \sum_{i = 0}^{d} \vec{c}_{i} \cdot \parameters^{i}$.
For example, if the pDTMC $\pdtmc$ has two parameters $\parameter_{1}$ and $\parameter_{2}$, then for $d = 2$ we get the quadratic polynomial $\approxfun{f}(\parameters) = \vec{c}_{0} + \vec{c}_{1} \cdot \parameters + \vec{c}_{2} \cdot \parameters^{2} = c_{0} + (c_{11} \cdot \parameter_{1} + c_{12} \cdot \parameter_{2}) + (c_{21} \cdot \parameter_{1}^{2} + c_{22} \cdot \parameter_{1} \cdot \parameter_{2} + c_{23} \cdot \parameter_{2}^{2} )$.
In general, for $n$ parameters and a polynomial of degree $d$, we need $\binom{n + d}{n}$ coefficients.

Given the PAC approximation schema $\approxfun{f}(\parameters) = \sum_{i = 0}^{d} \vec{c}_{i} \cdot \parameters^{i} = \vec{c}\cdot (1,\parameters,\cdots,\parameters^{d})^{T}$, we solve the following Linear Programming (LP) problem to learn the coefficients $\vec{c}$ of the polynomial $\approxfun{f}(\parameters)$: 
\begin{equation}
\label{eq:LPpolynomialApproximation}
\begin{split}
	\min\limits_{\vec{c}, \margin} & \quad \margin \\
	\mathrm{s.t.} & \quad -\margin \leq f(\parameters) - \vec{c} \cdot (1, \parameters, \dotsc, \parameters^{d})^{T} \leq \margin, \qquad \forall \parameters \in X,\\
	& \quad \vec{c} \in \reals^{\binom{n + d}{n}}, \margin \geq 0
\end{split}
\end{equation}
where $f(\parameters)$ is the analytic function on the domain $X = \prod_{i = 1}^{n} \range(\parameter_{i})$. 
Note that for pDTMRMs we do not need to compute the rational function $f_{\lsf}$ used as $f$ in problem~\eqref{eq:LPpolynomialApproximation} to get its value on $\parameters$, since we can first instantiate the pDTMRM with $\parameters$ and then compute the value of $\lsf$ in the instantiated~DTMRM.

Given the error rate $\errorRate$ and the significance level $\significanceLevel$, by Theorem~\ref{thm:PACnumberOfSamples} we need only to independently and identically sample at least $l \geq \frac{2}{\errorRate}\big(\ln\frac{1}{\significanceLevel} + \binom{n + d}{n} + 1\big)$ points $\approxfun{X} = \{\parameters_{i}\}_{i=1}^{l}$ to form the constraints used in the relaxed LP problem, as done in the problem~\eqref{eq:optProblemPAC}.
Concretely, we get the following LP problem:
\begin{equation}
\label{eq:PACLPpolynomialApproximation}
\begin{split}
	\min\limits_{\vec{c}, \margin} & \quad \margin \\
	\mathrm{s.t.} &  \quad \bigwedge^{l}_{i=1} -\margin \leq f(\parameters_{i}) - \vec{c} \cdot (1,\parameters_{i},\cdots,\parameters^{d}_{i})^{T} \leq \lambda, \qquad \forall \parameters_{i} \in \approxfun{X},  \\
	& \quad \vec{c} \in \reals^{\binom{n + d}{d}}, \margin \geq 0.
\end{split}
\end{equation}
We solve the optimization problem~\eqref{eq:PACLPpolynomialApproximation} to get the coefficients $\vec{c}$, hence the PAC approximation $\approxfun{f}$ of the original function $f$, with the statistical guarantees given by Def.~\ref{def:PACmodel}; 
in the context of a pDTMRM $\pdtmrm$ and a PRCTL state formula $\lsf$, we get the PAC approximation $\approxfun{f}_{\lsf}$ of the original function $f_{\lsf}$.

\subsection{PRCTL Property Analysis}
\label{subsec:reachability}

Given the probabilistic formula $\lsf = \lP{=?}(\lpf)$ with path formula $\lpf$,  we can obviously use the PAC approximation $\approxfun{f}_{\lsf}$ to check whether the domain of parameters $X$ is safe, with PAC guarantee.
In this section, we introduce a direct PAC based approach for checking domain's safety, without having to learn the approximations first. 
Then, we consider linear approximations and discuss how counterexamples can be generated in this case before showing how the polynomial PAC approximation $\approxfun{f}_{\lsf}$ can be used to analyze global properties of $f_{\lsf}$ over the whole parameter space $X$.
Lastly, we present how to extend the approach to the reward formula $\lsf = \lER{=?}(\lF \lsf')$.

\begin{definition}[Safe Region]
\label{def:safe region}
    Let $X = \prod_{i = 1}^{n} \range(\parameter_{i})$ be the domain of a set of $n$ parameters $\parametersSet$.
    Given a function $f \colon X \to \posreals$ and a safety level $\safetyLevel \in \posreals$, we say that the point $\parameters \in \parametersSet$ is \emph{safe} if and only if $f(\parameters) < \safetyLevel$; 
    we call $X$ safe if and only if each $\parameters \in \parametersSet$ is safe.
\end{definition}

Intuitively, we hope that the probability of the pDTMRM $\pdtmrm$ to reach an unsafe state under any choice of the parameters will be less than the given safety level, which is the motivation for defining the safe region.
To check whether the domain $X$ of the parameters is safe, we can resort to solve the following optimization problem with respect to the given error rate $\errorRate$ and significance level $\significanceLevel$, and compare the obtained optimal solution $\margin^{*}$ with $\safetyLevel$:
\begin{equation}
\label{eq:SafeRegion}
\begin{split}
	\min & \quad \margin \\
	\mathrm{s.t.} & \quad f(\parameters) \leq \margin \qquad \forall \parameters \in \approxfun{X},
\end{split}
\end{equation}
where $\approxfun{X} \subseteq X$ is a set of samples such that $\size{\approxfun{X}} \geq \left\lceil \frac{2}{\errorRate} \cdot (\ln \frac{1}{\significanceLevel} + 1) \right\rceil$.
The optimization problem~\eqref{eq:SafeRegion} can be solved in time $\bigO(\size{\approxfun{X}})$, since it only needs to compute the maximum value of $f_{\lsf}(\parameters)$ for $\parameters \in \approxfun{X}$ as the optimal solution $\margin^{*}$.  
Although the calculation is very simple, polynomials with degree 0, i.e., constants, also have good probability and statistical meaning, so we have the following result as a direct consequence of the definitions:
\begin{restatable}{lemma}{lemZeroApproximation}
\label{lem:zeroApproximation}
    Given the safety level $\safetyLevel$, if the optimal solution $\margin^{*}$ of the problem~\eqref{eq:SafeRegion} satisfies $\margin^{*} < \safetyLevel$, then the domain $X$ is safe with $(\errorRate, \significanceLevel)$-guarantee.
    Otherwise, if $\margin^{*} \geq \safetyLevel$, then the parameter point $\parameters^{*} \in \approxfun{X}$ corresponding to $\margin^{*}$ is unsafe.
\end{restatable}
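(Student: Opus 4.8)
The plan is to recognise problem~\eqref{eq:SafeRegion} as a special instance of the scenario program~\eqref{eq:optProblemPAC} in which the decision variable is the single scalar $\margin$, so that Theorem~\ref{thm:PACnumberOfSamples} applies with $m = 1$; this is precisely why the required sample size carries a ``$+1$'' inside the bound rather than a larger ``$+m$''. First I would cast the data of~\eqref{eq:SafeRegion} into the form of~\eqref{eq:optProblemPAC}: take $\Omega = X$ with $\omega = \parameters$, decision variable $\theta = \margin \in \reals$ (so $m = 1$) and linear objective $\vec{a}^{T}\theta = \margin$, and set the constraint functions to $f_{\parameters}(\margin) = f(\parameters) - \margin$, so that each sampled constraint $f(\parameters) \leq \margin$ becomes $f_{\parameters}(\margin) \leq 0$. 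Each $f_{\parameters}$ is affine, hence convex, in $\margin$, so the convexity hypotheses of the scenario framework hold. The program is clearly feasible and its unique optimal value is $\margin^{*} = \max_{\parameters \in \approxfun{X}} f(\parameters)$ (any tie among the sampled points attaining this maximum does not affect the optimal $\margin$, so the uniqueness hypothesis of Theorem~\ref{thm:PACnumberOfSamples} is met).

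Next I would invoke Theorem~\ref{thm:PACnumberOfSamples} directly. Since $\size{\approxfun{X}} \geq \lceil \frac{2}{\errorRate}(\ln\frac{1}{\significanceLevel} + 1)\rceil$ meets the bound with $m = 1$, the theorem guarantees that, with confidence at least $1 - \significanceLevel$, we have $P(f_{\parameters}(\margin^{*}) > 0) < \errorRate$, i.e.\ $P(f(\parameters) > \margin^{*}) < \errorRate$, equivalently $P(f(\parameters) \leq \margin^{*}) \geq 1 - \errorRate$.

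For the first claim, assume $\margin^{*} < \safetyLevel$. Then the condition $f(\parameters) < \safetyLevel$ is implied by $f(\parameters) \leq \margin^{*}$, so $P(f(\parameters) < \safetyLevel) \geq P(f(\parameters) \leq \margin^{*}) \geq 1 - \errorRate$, still with confidence at least $1 - \significanceLevel$. By Def.~\ref{def:safe region} a point $\parameters$ is safe exactly when $f(\parameters) < \safetyLevel$, so this is precisely the assertion that $X$ is safe with $(\errorRate, \significanceLevel)$-guarantee, understood in the PAC sense of Def.~\ref{def:PACmodel}. For the second claim, if $\margin^{*} \geq \safetyLevel$, then the sampled point $\parameters^{*} \in \approxfun{X}$ attaining the maximum satisfies $f(\parameters^{*}) = \margin^{*} \geq \safetyLevel$, so by Def.~\ref{def:safe region} it is unsafe; this direction is immediate and needs no probabilistic reasoning.

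The main obstacle is conceptual rather than computational: one has to read ``safe with $(\errorRate, \significanceLevel)$-guarantee'' as the PAC statement that, with confidence $1 - \significanceLevel$, a $P$-random parameter point is safe with probability at least $1 - \errorRate$, rather than as literal safety of every point of $X$; and one must check that the strict and non-strict inequalities align, so that $f(\parameters) \leq \margin^{*} < \safetyLevel$ does deliver the strict safety condition $f(\parameters) < \safetyLevel$. Once the reduction to Theorem~\ref{thm:PACnumberOfSamples} with $m = 1$ is in place, the remainder is a direct translation of that theorem's conclusion.
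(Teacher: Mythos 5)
Your proof is correct and follows exactly the route the paper intends: the paper gives no explicit proof for this lemma (it is declared ``a direct consequence of the definitions''), and the intended argument is precisely your reduction of problem~\eqref{eq:SafeRegion} to the scenario program~\eqref{eq:optProblemPAC} with the single decision variable $\margin$ (so $m=1$, matching the ``$+1$'' in the sample bound), followed by Theorem~\ref{thm:PACnumberOfSamples} for the first claim and the trivial observation $f(\parameters^{*}) = \margin^{*} \geq \safetyLevel$ for the second. Your remark that the $(\errorRate,\significanceLevel)$-guarantee must be read in the PAC sense rather than as literal safety of every point of $X$ is also the correct reading of the statement.
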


By Lemma~\ref{lem:zeroApproximation}, we can analyze with $(\errorRate, \significanceLevel)$-guarantee whether the parameter space is safe or not. 
For example, consider the pDTMC $\pdtmc$ shown in Fig.~\ref{fig:pdtmc} and the safety property $\lP{< 0.8}(\lF (\failure_{c} \lor \failure_{o}))$. 
If we set $\errorRate = \significanceLevel = 0.05$, by sampling in the region $X = \interval{0.01}{0.09} \times \interval{0.25}{0.8}$ at least 160 points and solving the resulting optimization problem~\eqref{eq:SafeRegion}, we get the optimal value $\margin^{*} = 0.747$ by rounding to three decimals. 
Since $\margin^{*} = 0.747 < 0.8$, by Lemma~\ref{lem:zeroApproximation}, the region $X$ is safe with $(0.05, 0.05)$-guarantee.

\subsubsection{Linear PAC Approximation and Counterexamples.}
\label{sssec:1degree}

Since constants can approximate the maximum value of the function $f$ with the given $(\errorRate,\significanceLevel)$-PAC guarantee, linear functions can also be used to approximate $f$, which are more precise than constants.
Also, we can check whether there is an unsafe region in the domain of parameters $X$ with a given confidence, by the following Lemma~\ref{lem:linearApproximation}, and further search counterexamples by linear PAC approximations.
\begin{restatable}{lemma}{lemLinearApproximation}
\label{lem:linearApproximation}
    Given the domain of parameters $X$, a function $f \colon X \to \posreals$, and a probability measure $P$ over $X$, let $\approxfun{f}$ be a PAC approximation of $f$ with $(\errorRate, \significanceLevel)$-guarantee.
    Given the safety level $\safetyLevel \in \posreals$, if for each $\parameters \in X$ we have $\approxfun{f}(\parameters) + \margin < \safetyLevel$, then $P(f(\parameters) < \safetyLevel) \geq 1 - \errorRate$ holds with confidence $1 - \significanceLevel$. 
    In turn, if $P(\approxfun{f}(\parameters) - \margin > \safetyLevel) > \errorRate$, then there exist $\parameters \in X$ such that $f(\parameters) > \safetyLevel$ holds with confidence $1 - \significanceLevel$.
\end{restatable}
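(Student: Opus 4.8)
The plan is to reduce both implications to the PAC guarantee of Definition~\ref{def:PACmodel}, which states that with confidence $1 - \significanceLevel$ we have $P(\abs{\approxfun{f}(\parameters) - f(\parameters)} \leq \margin) \geq 1 - \errorRate$. Throughout, I would work conditionally on the (high-confidence) event that this inequality holds for the synthesized $\approxfun{f}$; everything downstream is then a deterministic statement about the measure $P$, so the confidence level $1 - \significanceLevel$ is simply inherited by the conclusions. It is convenient to abbreviate the ``good'' set as $A = \setcond{\parameters \in X}{\abs{\approxfun{f}(\parameters) - f(\parameters)} \leq \margin}$, so that conditionally $P(A) \geq 1 - \errorRate$.

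For the first claim I would argue by a set inclusion. If $\parameters \in A$, then in particular $f(\parameters) \leq \approxfun{f}(\parameters) + \margin$, and combining this with the hypothesis $\approxfun{f}(\parameters) + \margin < \safetyLevel$ (assumed for every $\parameters \in X$) yields $f(\parameters) < \safetyLevel$. Hence $A \subseteq \setcond{\parameters \in X}{f(\parameters) < \safetyLevel}$, and monotonicity of $P$ gives $P(f(\parameters) < \safetyLevel) \geq P(A) \geq 1 - \errorRate$, with the confidence $1 - \significanceLevel$ carried over from the conditioning. This direction is essentially immediate once the inclusion is spotted.

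For the second claim the hypothesis is $P(B) > \errorRate$, where $B = \setcond{\parameters \in X}{\approxfun{f}(\parameters) - \margin > \safetyLevel}$, and I must exhibit some $\parameters$ with $f(\parameters) > \safetyLevel$. The key observation is that on $A \cap B$ we have $f(\parameters) \geq \approxfun{f}(\parameters) - \margin > \safetyLevel$, so it suffices to show $A \cap B \neq \emptyset$. By additivity, $P(A \cap B) \geq P(A) + P(B) - 1 > (1 - \errorRate) + \errorRate - 1 = 0$, so $A \cap B$ has strictly positive measure and is in particular nonempty, furnishing the desired witness with confidence $1 - \significanceLevel$.

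I expect no genuine obstacle beyond careful bookkeeping: the only subtlety is placing the ``with confidence $1 - \significanceLevel$'' qualifier correctly, i.e., recognizing that the randomness resides entirely in the draw of the samples defining $\approxfun{f}$, whereas once $\approxfun{f}$ is fixed both conclusions are deterministic consequences about $P$. The second part additionally leans on the elementary inclusion--exclusion bound $P(A \cap B) \geq P(A) + P(B) - 1$ to force the intersection to be nonempty, which is precisely where the strict inequality $P(B) > \errorRate$ is consumed.
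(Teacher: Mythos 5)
Your proof is correct, and the first implication is argued exactly as in the paper (the paper writes the inclusion as a chain of inequalities $P(f(\parameters) < \safetyLevel) \geq P(f(\parameters) < \approxfun{f}(\parameters) + \margin) \geq \dots \geq 1 - \errorRate$, which is your set-inclusion argument in disguise). For the second implication you take a genuinely different, and in fact slightly cleaner, route: the paper argues by contradiction, assuming $f(\parameters) < \safetyLevel$ everywhere, deducing $P(\approxfun{f}(\parameters) > \safetyLevel + \margin) < \errorRate$ from the PAC guarantee, and contradicting the hypothesis $P(\approxfun{f}(\parameters) - \margin > \safetyLevel) > \errorRate$; you instead extract a witness directly by showing $P(A \cap B) \geq P(A) + P(B) - 1 > 0$ and reading off $f(\parameters) \geq \approxfun{f}(\parameters) - \margin > \safetyLevel$ on $A \cap B$. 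The two are essentially contrapositives of one another, but your version has two small advantages: it delivers the strict inequality $f(\parameters) > \safetyLevel$ directly (negating the paper's assumption only yields $f(\parameters) \geq \safetyLevel$ at that step), and it makes explicit where the strictness of $P(B) > \errorRate$ is consumed. Your remark that the confidence $1 - \significanceLevel$ lives entirely in the draw of the samples defining $\approxfun{f}$, with everything afterwards deterministic, is also the right way to place that qualifier, which the paper leaves implicit.
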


\begin{figure}[t]
    \centering
    \resizebox{\linewidth}{!}{
        \includegraphics{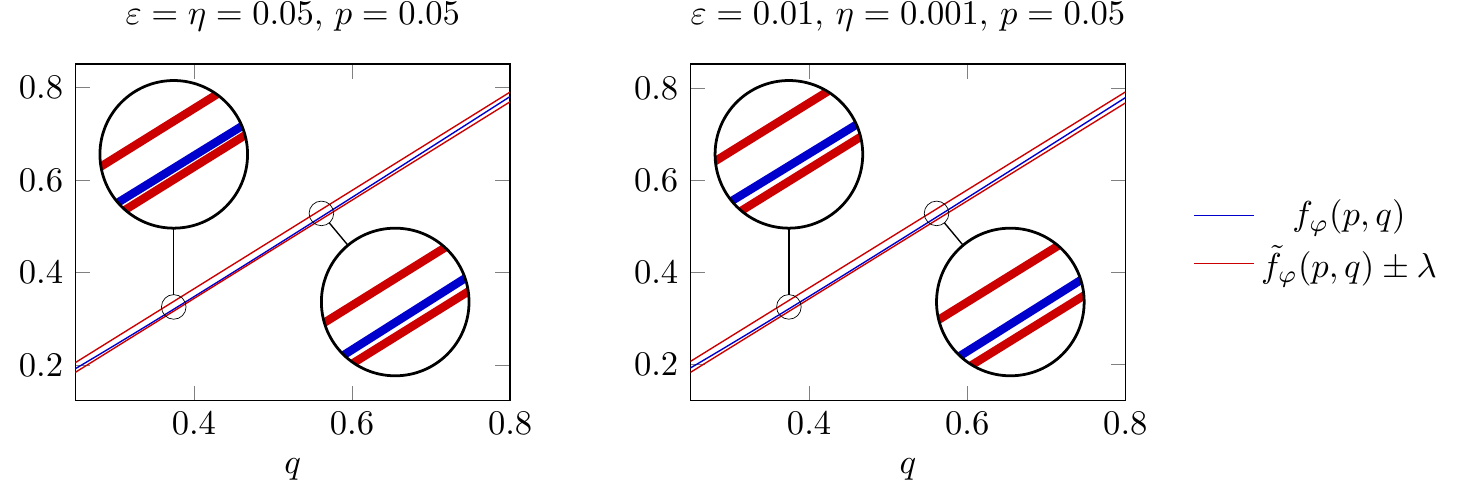}
    }
    \caption{The rational function  $f_{\lsf}(p, q) = \frac{q^{2}}{q + 2p - 2pq}$ and its linear approximations $\approxfun{f}_{\lsf}(p, q)$ with different choices of $\errorRate$ and $\significanceLevel$}
    \label{fig:linearApproximation}
\end{figure}

The plots in Fig.~\ref{fig:linearApproximation} show the results of applying linear PAC approximation on the function $f_{\lsf}(p, q)$, with $\lsf = \lP{=?}(\lF \success)$, for the pDTMC $\pdtmc$ shown in Fig.~\ref{fig:pdtmc}. 
We sampled 280 points for $\errorRate = \significanceLevel = 0.05$ and 2182 points for $\errorRate = 0.01$ and $\significanceLevel = 0.001$, respectively, according to Thm.~\ref{thm:PACnumberOfSamples}. 
The plot on the left, where we fix the parameter $p = 0.05$, shows that even if we sample just 280 points, $f_{\lsf}(p, q)$ and $\approxfun{f}_{\lsf}(p, q)$ are closer than the computed margin $\margin$. 
For the case $\errorRate = \significanceLevel = 0.05$, the linear approximation is $\approxfun{f}_{\lsf}(p, q) = -0.035 + 1.063 * q - 0.718 * p$ with $\margin = 0.011$ by rounding the coefficients to three decimals. 
We can easily check that for each $(p,q) \in X$ we have $\approxfun{f}_{\lsf}(p,q) + \margin < 0.85$ by linear programming, so $X = \interval{0.01}{0.09} \times \interval{0.25}{0.8}$ is a $0.85$-safe region with respect to $f_{\lsf}(p,q)$ with $(0.05,0.05)$-guarantee. 
However, if we set $\safetyLevel = 0.6$, we can prove $P(\approxfun{f}_{\lsf} - \margin > \safetyLevel) = 0.288 > \errorRate = 0.05$, so by Lemma~\ref{lem:linearApproximation} we get that there exist an unsafe region such that $f(p,q) > \safetyLevel$, with confidence $95\%$.

We can take advantage of the easy computation of linear programming with linear functions to further search for potential counterexamples that may exist.
The maximum value of $\approxfun{f}_{\lsf}$ can be found at $(0.01, 0.8)$, according to the linearity of $\approxfun{f}_{\lsf}$, so we can instantiate the pDTMC $\pdtmc$ in Fig.~\ref{fig:pdtmc} with the parameter point $(0.01, 0.8)$ to get that $f_{\lsf}(p,q) = 0.796$. 
Since $f_{\lsf}(p,q) > 0.6$ for the safety level $\safetyLevel = 0.6$, we can claim that the \emph{real counterexample} $(0.01, 0.8)$ is found.
In the case that the parameter point $\parameters_{0} = (p,q)$ corresponding to maximum value of $\approxfun{f}_{\lsf}$ is a spurious counterexample for the pDTMC with respect to $\lsf$, we can learn a more precise approximation by adding $\parameters_{0}$ to $\approxfun{X}$. 
One may also divide the domain $X$ into several subdomains and analyze each of them separately. 

As for the computational complexity, it is easy to find the maximum value of a linear function by linear programming; 
on the other hand, computing the maximum value of polynomials and rational functions is rather difficult if their degree is very high or the dimension of the parameter space is too large.  
So a linear function is a good alternative to compute the maximum value of $f$ with PAC guarantee, while polynomials are suitable for analyzing more complicated properties, such as the global ones considered below.

\subsubsection{Polynomial PAC Approximation.}
\label{sssec:polynomialApproximation}

One advantage of polynomials over rational functions is that they make it easy to compute complex operations such as inner product and integral~\cite{rudin1976principles}, as needed to evaluate e.g.\@ the $L_{p}$ norm $\norm[p]{g} = \sqrt[p]{\int_{Z} \abs{g(z)}^{p}\,dz}$ of a function $g \colon Z \to \reals$, with $p \geq 1$. 
This means that we can adopt polynomials to check some more complicated properties of a pDTMRM $\pdtmrm$, such as whether the function $f_{\lsf}$ is close to a given number $\beta$ on the whole parameter space $X$. 
This is useful, for instance, to evaluate how much the behavior of $\pdtmrm$ with respect to the property $\lsf$ is affected by the variations of the parameters.
We can model this situation as follows:
\begin{definition}
\label{def:fluctuations}
    Given the domain $X$ of a set of parameters, a function $f \colon X \to \posreals$, a safety level $\safetyLevel$, and $\beta \in \posreals$, we say that \emph{$f$ is near $\beta$ within the safety level $\safetyLevel$} on $X$ with respect to the $L_{p}$ norm, if $\norm[p]{f - \beta} < \safetyLevel$.
\end{definition}
To verify the above property, we can rely on the following result:
\begin{restatable}{lemma}{lemNearMargin}
\label{lem:nearMargin}
    Given $X$, $f$, $\safetyLevel$, and $\beta$ as in Def.~\ref{def:fluctuations}, let $M$ be an upper bound of $f(X)$ and $\approxfun{f}$ be a PAC approximation of $f$ with $(\errorRate, \significanceLevel)$-guarantee and margin $\margin$; 
    let $\size{X} = \int_{X} 1\,d\parameters$.
    For each $p \geq 1$, if $\approxfun{f}$ satisfies the condition 
    \begin{equation}
    \label{eq:polyUpperbound}
        \sqrt[p]{ \left(\margin \sqrt[p]{(1 - \errorRate) \cdot \size{X} } + \norm[p]{\approxfun{f} - \beta}\right)^{p} + \errorRate \cdot \size{X} \cdot \max(\size{M-\beta}^{p}, \beta^{p})} < \safetyLevel
    \end{equation}
    then $\norm[p]{f - \beta} < \safetyLevel$ holds with confidence $1 - \significanceLevel$.
\end{restatable}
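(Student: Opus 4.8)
The plan is to bound the $L_p$ norm $\norm[p]{f - \beta}$ directly, using the PAC guarantee to control the region where $\approxfun{f}$ is a faithful surrogate for $f$. First I would invoke Def.~\ref{def:PACmodel}: with confidence $1 - \significanceLevel$, the ``good'' set $G = \setcond{\parameters \in X}{\abs{\approxfun{f}(\parameters) - f(\parameters)} \leq \margin}$ satisfies $P(G) \geq 1 - \errorRate$. Since $P$ is the uniform measure on $X$, this is a statement about volumes: $\int_G 1\,d\parameters \geq (1 - \errorRate)\,\size{X}$ and, for the complementary ``bad'' set $B = X \setminus G$, $\int_B 1\,d\parameters \leq \errorRate\,\size{X}$. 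All subsequent reasoning is carried out on the confidence-$(1 - \significanceLevel)$ event on which these volume bounds hold, so the conclusion inherits that same confidence.

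Next I would split the integral defining the norm, $\norm[p]{f - \beta}^p = \int_G \abs{f - \beta}^p\,d\parameters + \int_B \abs{f - \beta}^p\,d\parameters$, and estimate each piece separately. On $G$ the approximation error is at most $\margin$, so the triangle inequality gives the pointwise bound $\abs{f - \beta} \leq \abs{f - \approxfun{f}} + \abs{\approxfun{f} - \beta} \leq \margin + \abs{\approxfun{f} - \beta}$. Applying Minkowski's inequality for the $L_p$ norm over $G$ then separates the constant part, which integrates to $\margin \sqrt[p]{(1 - \errorRate)\,\size{X}}$, from the term $(\int_G \abs{\approxfun{f} - \beta}^p\,d\parameters)^{1/p}$, which I would relax to the full-domain norm $\norm[p]{\approxfun{f} - \beta}$ since the integrand is nonnegative. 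This reproduces the summand $(\margin \sqrt[p]{(1 - \errorRate)\,\size{X}} + \norm[p]{\approxfun{f} - \beta})^p$ appearing inside the $p$-th root of~\eqref{eq:polyUpperbound}.

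For the ``bad'' piece I would use only the crude range of $f$: since $0 \leq f \leq M$ we have $f - \beta \in [-\beta, M - \beta]$, hence $\abs{f - \beta} \leq \max(\abs{M - \beta}, \beta)$, and therefore $\int_B \abs{f - \beta}^p\,d\parameters \leq \errorRate\,\size{X} \cdot \max(\abs{M - \beta}^p, \beta^p)$ by the volume bound on $B$. Adding the two estimates and taking the $p$-th root yields precisely the left-hand side of~\eqref{eq:polyUpperbound} as an upper bound for $\norm[p]{f - \beta}$; the hypothesis that this quantity lies below $\safetyLevel$ then forces $\norm[p]{f - \beta} < \safetyLevel$, which is exactly what Def.~\ref{def:fluctuations} requires.

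The step I expect to be the most delicate is the bookkeeping of the measures of $G$ and $B$ when the two bounds are combined. The PAC guarantee furnishes only the inequalities $\int_G 1 \geq (1 - \errorRate)\,\size{X}$ and $\int_B 1 \leq \errorRate\,\size{X}$, whereas the target~\eqref{eq:polyUpperbound} is written as if $G$ carries volume \emph{exactly} $(1 - \errorRate)\,\size{X}$ and $B$ \emph{exactly} $\errorRate\,\size{X}$. I would therefore have to verify that this balanced split is the extremal configuration, i.e., that redistributing mass between the two regions (subject to $\int_B 1 \leq \errorRate\,\size{X}$) does not push the right-hand side above the stated value; making the $\margin$-term regime and the $\max(\abs{M - \beta}^p, \beta^p)$-term regime line up simultaneously, while keeping the Minkowski relaxation on $G$ valid, is the main technical care required. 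The remaining manipulations are routine.
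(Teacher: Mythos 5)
Your proof is correct and follows essentially the same route as the paper's: split $X$ into the set where $\abs{f(\parameters)-\approxfun{f}(\parameters)}\leq\margin$ and its complement, apply the triangle and Minkowski inequalities on the former, and use the crude range bound $\abs{f-\beta}\leq\max(\abs{M-\beta},\beta)$ on the latter. The volume-bookkeeping issue you flag is genuine (the paper's own proof glosses over it as well) and is resolved by taking $X_{1}$ to be a subset of the good set with volume exactly $(1-\errorRate)\cdot\size{X}$ --- possible since the uniform measure is non-atomic --- so that the good-set estimate and the bad-set estimate both hold with the exact volumes used in~\eqref{eq:polyUpperbound}.
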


Consider again the pDTMC $\pdtmc$ shown in Fig.~\ref{fig:pdtmc} and $\lsf = \lP{=?}(\lF \success)$; 
since $f_{\lsf}$ represents probabilities, we have the well-known upper bound $M = 1$.
Here we consider the $L_{2}$ norm, which is widely used in describing the error between functions in the signal processing field (see, e.g.,~\cite{boggess2015first,conway2019course}), as it can reflect the global approximation properties and is easy to compute. 
To simplify the notation, let $\upperboundFXBeta$ denote the complex expression occurring in the formula~\eqref{eq:polyUpperbound}, that is:
\[
    \upperboundFXBeta(\approxfun{f}_{\lsf}, X, \beta) = \sqrt{ \left(\margin \sqrt{(1 - \errorRate) \cdot \size{X}} + \norm[2]{\approxfun{f}_{\lsf} - \beta}\right)^{2} + \errorRate \cdot \size{X} \cdot \max(\size{1-\beta}^{2}, \beta^{2})}.
\] 
We want to know whether $f_{\lsf}(p, q) = \frac{q^{2}}{q + 2p - 2pq}$ is near $0.5$ within $0.05$, i.e., given the safety level $\safetyLevel = 0.05$, we want to check $\norm[2]{f_{\lsf} - 0.5} < 0.05$.
According to Lemma~\ref{lem:nearMargin}, we first compute a PAC approximation $\approxfun{f}_{\lsf}$ of $f_{\lsf}$. 
By setting $\errorRate = \significanceLevel = 0.05$, we get the quadratic polynomial 
$\approxfun{f}_{\lsf}(p, q) = 0.013 + 0.925 * q  - 1.442 * p  + 0.953 * pq + 2.072 * p^{2} + 0.085 * q^{2}$, by rounding to three decimals. 
In this case, we get $\upperboundFXBeta(\approxfun{f}_{\lsf}, X, \beta) = 0.0432 < \safetyLevel = 0.05$, so Lemma~\ref{lem:nearMargin} applies. 
If, instead, we would have chosen $\safetyLevel' = 0.04$, then we cannot prove $\norm[2]{f_{\lsf} - 0.5} < 0.04$ by relying on Lemma~\ref{lem:nearMargin}. 
To do so, we need to consider the more conservative values $\errorRate = 0.01$ and $\significanceLevel = 0.001$, which give us $\upperboundFXBeta(\approxfun{f}_{\lsf}, X, \beta) = 0.0379 < \safetyLevel' = 0.04$, so we can derive that  $\norm[2]{f_{\lsf} - 0.5} < 0.04$ holds with confidence $99.9\%$.

\subsubsection{Extension to Reward Models.}
\label{sssec:rewardExtension}

The extension of the constructions given above to reward properties is rather easy:
for instance, we can approximate the rational function representing the state property $\lsf = \lER{\mathord{= ?}}(\lF \lsf')$, the reward counterpart of $\lP{\mathord{=}?}(\lpf')$, by instantiating $f_{\lsf}(\parameters_{i})$ in Problem~\eqref{eq:PACLPpolynomialApproximation} with the expected reward value computed on the pDTMC instantiated with $\parameters_{i}$.
Similarly, we can compute linear and polynomial PAC approximations for safe regions, with the latter defined in terms of the value of the reward instead of the probability.

We can consider also the following case:
given a pDTMRM $\pdtmrm$, we want to verify whether the expected value of $\lsf = \lER{=?}(\lF \lsf')$ over the parameters $\parameters$, denoted $f_{\lsf}(\parameters)$, can reach a given reward level $\rewardLevel$. 
This model the  scenarios where, to make a decision, we need to know whether the expectation of the rewards for a certain decision satisfies the given conditions.
We formalize this case as follows:
\begin{definition}
\label{def:fluctuationsAboveRewardLevel}
    Given the domain $X$ of a set of parameters, a function $f \colon X \to \posreals$, a reward level $\rewardLevel$, and a probability measure $P$ over $X$, we say that \emph{the expectation of $f$} on $X$ with respect to $P$ can reach the reward level $\rewardLevel$, if 
    \begin{equation}
    \label{eq:rewardmodel}
        \int_{X} f(\parameters)\,dP(\parameters) > \rewardLevel.
    \end{equation}
\end{definition}

We can resort to the following lemma to check condition~\eqref{eq:rewardmodel}:
\begin{restatable}{lemma}{lemRewardModel}
\label{lem:RewardModel}
    Given $X$, $f$, $P$, and $\rewardLevel$ as in Def.~\ref{def:fluctuationsAboveRewardLevel}, let $\approxfun{f}$ be a PAC approximation of $f$ with $(\errorRate, \significanceLevel)$-guarantee and margin $\margin$. 
    If $\approxfun{f}$ satisfies the condition 
    \begin{equation}
    \label{eq:lowerboundofrewardmodel}
        \int_{X}(\approxfun{f}(\parameters) - \margin)\,dP(\parameters) - \errorRate \cdot \size{X} \cdot \max_{\parameters \in X}(\approxfun{f}(\parameters) - \margin) > \rewardLevel,
    \end{equation}
    then Condition~\eqref{eq:rewardmodel} holds with confidence $1 - \significanceLevel$.
\end{restatable}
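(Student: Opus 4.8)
The plan is to condition on the single high-probability event furnished by the PAC guarantee and then carry out an entirely deterministic estimate of the integral. By Def.~\ref{def:PACmodel}, with confidence $1 - \significanceLevel$ the approximation $\approxfun{f}$ satisfies $P(\abs{\approxfun{f}(\parameters) - f(\parameters)} \leq \margin) \geq 1 - \errorRate$. First I would fix a realisation for which this holds and introduce the \emph{good set} $G = \setcond{\parameters \in X}{\abs{\approxfun{f}(\parameters) - f(\parameters)} \leq \margin}$ together with its complement $B = X \setminus G$; on this realisation $P(G) \geq 1 - \errorRate$ and hence $P(B) \leq \errorRate$. Since every inequality derived from here on is deterministic, the final conclusion will automatically inherit the confidence $1 - \significanceLevel$.

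Next I would split the target integral as $\int_X f\,dP = \int_G f\,dP + \int_B f\,dP$ and estimate the two pieces separately. On $G$ the defining inequality gives $f(\parameters) \geq \approxfun{f}(\parameters) - \margin$, so $\int_G f\,dP \geq \int_G (\approxfun{f} - \margin)\,dP$. On $B$ I would use only the hypothesis $f \colon X \to \posreals$, i.e.\@ $f \geq 0$, to drop that contribution via $\int_B f\,dP \geq 0$. Rewriting $\int_G (\approxfun{f} - \margin)\,dP = \int_X (\approxfun{f} - \margin)\,dP - \int_B (\approxfun{f} - \margin)\,dP$, the whole estimate reduces to producing a good \emph{upper} bound on the correction term $\int_B (\approxfun{f} - \margin)\,dP$.

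The main obstacle is exactly this bad-set correction, because on $B$ we have no control at all on how far $\approxfun{f}$ is from $f$. The natural remedy is a worst-case bound: pointwise $\approxfun{f}(\parameters) - \margin \leq \max_{\parameters \in X}(\approxfun{f}(\parameters) - \margin)$, so $\int_B (\approxfun{f} - \margin)\,dP \leq \max_{\parameters \in X}(\approxfun{f}(\parameters) - \margin) \cdot P(B)$ with $P(B) \leq \errorRate$; converting the $P$-measure of the bad region into the volume of the domain exactly as in the bad-region estimate of Lemma~\ref{lem:nearMargin} produces the factor $\errorRate \cdot \size{X}$. The subtle point to watch is the sign: this step needs $\max_{\parameters \in X}(\approxfun{f}(\parameters) - \margin) \geq 0$, so that enlarging the measure bound enlarges the product; this is harmless in the reward setting of Def.~\ref{def:fluctuationsAboveRewardLevel}, where $f$ and its approximation take nonnegative values and the condition~\eqref{eq:lowerboundofrewardmodel} can only hold for $\rewardLevel \geq 0$ when this maximum is itself nonnegative.

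Putting the pieces together yields $\int_X f\,dP \geq \int_X (\approxfun{f} - \margin)\,dP - \errorRate \cdot \size{X} \cdot \max_{\parameters \in X}(\approxfun{f}(\parameters) - \margin)$, and the left-hand side of hypothesis~\eqref{eq:lowerboundofrewardmodel} is precisely this lower bound; assuming it exceeds $\rewardLevel$ gives $\int_X f\,dP > \rewardLevel$, i.e.\@ Condition~\eqref{eq:rewardmodel}, with confidence $1 - \significanceLevel$. The only genuinely delicate ingredients are the bad-set handling and the bookkeeping of the measure-versus-volume factor $\size{X}$; everything else is a routine decomposition.
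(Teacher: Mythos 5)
Your proof is correct and follows essentially the same route as the paper's: split $X$ into the good set where $\abs{f - \approxfun{f}} \leq \margin$ (the paper's $X_{1}$, your $G$) and its complement, use $f \geq 0$ to drop the bad-set contribution to $\int_X f\,dP$, and bound the bad-set correction $\int_{B}(\approxfun{f}-\margin)\,dP$ by $\errorRate\cdot\size{X}\cdot\max_{\parameters\in X}(\approxfun{f}(\parameters)-\margin)$. Your remark that this last step needs the maximum to be nonnegative is a genuine subtlety that the paper's own proof passes over silently, so flagging it is a point in your favour rather than a gap.
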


\section{Experimental Evaluation}
\label{sec:experiments}

We have implemented the PAC-based analysis approach proposed in Sect.~\ref{sec:synthesisPACfunctions} in a prototype tool \pacmc and evaluated it on several benchmarks:
we considered the DTMCs from the \prism benchmark suite~\cite{DBLP:conf/qest/KwiatkowskaNP12}, and replaced the probabilistic choices in them with parameters.
The probabilistic choices in most of the models correspond to the flip of a fair coin, so we considered three possibles ranges for the parameters, namely $\interval{0.01}{0.33}$, $\interval{0.33}{0.66}$, and $\interval{0.66}{0.99}$, to represent the fact that the coin is strongly unfair to head, rather fair, and strongly unfair to tail, respectively.
For the remaining models, where the choice is managed by the uniform distribution over several outcomes, we split the outcomes into two groups (e.g., odd and even outcomes) and then used a parametric coin and five intervals to choose the group.
By considering the reachability properties available for each DTMC and the choice of the constants controlling the size of the DTMCs, we get a total of 936 benchmarks for our evaluation for probabilistic properties and 620 benchmarks for expected rewards.
We performed our experiments on a desktop machine with an i7-4790 CPU and 16 GB of memory running Ubuntu Server 20.04.4; 
we used \benchexec~\cite{DBLP:journals/sttt/BeyerLW19} to trace and constrain the tools' executions: 
we allowed each benchmark to use 15 GB of memory and imposed a time limit of 10 minutes of wall-clock time.

\pacmc is written in JAVA and uses \storm~\cite{DBLP:journas/sttt/HenselJKQV21} and \matlab to get the value of the analyzed property and the solution of the LP problem, respectively.
We also used \storm v1.7.0 and \prism~\cite{DBLP:conf/cav/KwiatkowskaNP11} v4.7 to compute the actual rational functions for the benchmarks, to check how well our PAC approximation works in practice.
We were unable to compare with the fraction-free approach proposed in~\cite{DBLP:journals/iandc/BaierHHJKK20} since it is implemented as an extension of \storm v1.2.1 that fails to build on our system.
To avoid to call repeatedly \storm for each sample as an external process, we wrote a C wrapper for \storm that parses the input model and formula and sets the model constants only once, and then repeatedly instantiates the obtained parametric model with the samples and computes the corresponding values of the property, similarly to the batch mode used in~\cite{DBLP:conf/cav/BadingsJJSV22}.
We also implemented a multi-threaded evaluation of the sampled points, by calling multiple instances of the wrapper in parallel on a partition of the samples.

\subsection{Overall Evaluation}
\label{ssec:experimentsOverallEvaluation}

\begin{table}[t]
    \setlength{\tabcolsep}{3pt}
    \centering
    \caption{Overview of the outcomes of the experiments}
    \label{tab:experimentsDtmcsProbabilityOverallParallel}
    \begin{tabular}{cc|c|c|ccccc}
        & \multirow{2}{*}{Outcome} & \multirow{2}{*}{\prism} & \multirow{2}{*}{\storm} & \pacmc[1] & \pacmc[2] & \pacmc[3] & \pacmc[4] & \pacmc[5] \\
        & & & & \multicolumn{5}{c}{\pacmc[d] parallelism: 1 thread/8 threads} \\
        \hline
        \multirow{3}{*}{\rotatebox[origin=c]{90}{$\lP{=?}[\lpf]$}} &
        Success & 522 & 576 & 594/629 & 585/621 & 576/621 & 576/621 & 576/603 \\
        & Memoryout & 18 & 63 & 0/306 & 0/306 & 0/306 & 0/306 & 0/306 \\
        & Timeout & 396 & 297 & 342/1 & 351/9 & 360/9 & 360/9 & 360/27 \\
        \hline
        \multirow{3}{*}{\rotatebox[origin=c]{90}{$\lER{=?}[\lpf]$}} &
        Success & 153 & 224 & 302/302 & 302/302 & 302/302 & 302/302 & 302/302 \\
        & Memoryout & 0 & 0 & 0/282 & 0/282 & 0/282 & 0/282 & 0/282 \\
        & Timeout & 467 & 396 & 318/36 & 318/36 & 318/36 & 318/36 & 318/36 
    \end{tabular}
\end{table}

In Table~\ref{tab:experimentsDtmcsProbabilityOverallParallel} we show the outcome of the different tools on the 936 probabilistic (marked with $\lP{=?}[\lpf]$) and 620 reward (marked with $\lER{=?}[\lpf]$) benchmarks, namely whether they successfully produced a rational function or whether they failed by timeout or by running out of memory.
Besides the results for \prism and \storm computing the actual rational function, we report two values for each outcome of \pacmc[d], where the superscript $d$ indicates the degree of the polynomial used as template:
in e.g. the pair 594/629, the first value 594 is relative to the single-threaded \pacmc[1], while the value 629 is for the 8-threaded \pacmc[1], i.e., \pacmc with 8 instances of the \storm wrapper running in parallel.
As parameters for \pacmc, we set $\errorRate = \significanceLevel = 0.05$; 
for the benchmarks with two parameters, this results in sampling between 280 and 1000 points, for $d = 1$ to $d = 5$, respectively.
To make the comparison between the different templates fairer, we set the same random seed for each run of \pacmc; 
this ensures that all samples used by e.g. \pacmc[2] are also used by \pacmc[5].
As we can see from Table~\ref{tab:experimentsDtmcsProbabilityOverallParallel}, \pacmc is able to compute polynomials with different degrees for more benchmarks than \storm and \prism.
By inspecting the single experiments, for the probabilistic properties we have that $\prism \subseteq \storm \subseteq \pacmc[d]_{n} \subseteq \pacmc[d']_{n}$ for each $d' < d$ degrees and $n$ threads, as sets of successfully solved cases; 
we also have that $\pacmc[d]_{1} \subseteq \pacmc[d]_{8}$ for each $d$.
For the reward properties we have that $\pacmc[d]_{n} = \pacmc[d']_{n'}$ for each combination of $d, d' \in \setnocond{1, \cdots, 5}$ and $n, n' \in \setnocond{1, 8}$ and that $\storm, \prism \subseteq \pacmc[d]_{n}$; 
however \storm and \prism are incomparable, with cases solved by \storm but not by \prism, and vice-versa.
In the next section we will evaluate how the margin $\margin$ changes depending on the degree $d$ and the statistical parameters $\errorRate$ and $\significanceLevel$ through the induced~number~of~samples.

\subsection{Relation of the Polynomial Degree $d$ and the Number of Samples with the Margin $\margin$ and the Distance $\norm[2]{f_{\lsf} - \approxfun{f}_{\lsf}}$}
\label{ssec:experimentsDegreeSamplesVsMarginDistance}

\begin{figure}[t]
	\centering
 	\resizebox{\linewidth}{!}{
    	\begin{tikzpicture}
    	    \node (scatter) at (0,0) {\resizebox{!}{40mm}{\includegraphics{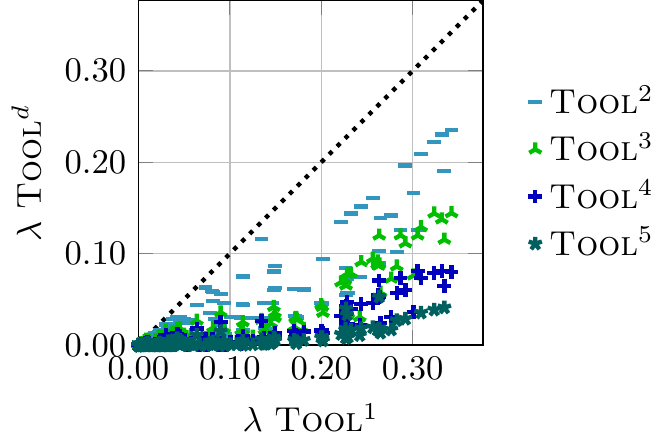}}};
    		\node[anchor=west] (box) at ($(scatter.east) + (0.75,0)$) {\resizebox{!}{40mm}{\includegraphics{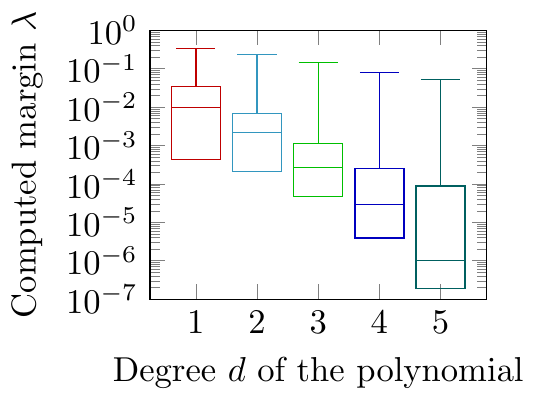}}};
    	\end{tikzpicture}
	}
	\caption{Scatter plot for the margin~$\margin$ for different \pacmc[d] and box plots for the margin~$\margin$}
	\label{fig:experimentsDtmcsProbabilityLambdaAndBoxes}
\end{figure}

In Fig.~\ref{fig:experimentsDtmcsProbabilityLambdaAndBoxes} we present plots for \pacmc using polynomial templates with different degrees and how the computed $\margin$ changes.
As we can see from the plots, by using a higher degree we get a lower value for the margin $\margin$, as one would expect given that polynomials with higher degree can approximate better the shape of the actual rational function: 
from the box plots on the right side of the figure, we can see that using higher degree polynomials allows us to get values for $\margin$ that are much closer to $0$.
Note that in these box plots we removed the lower whiskers since they are $0$ for all degrees, and we use a logarithmic y-axis.
The scatter plot shown on the left side of Fig.~\ref{fig:experimentsDtmcsProbabilityLambdaAndBoxes}, where we compare the values of $\margin$ produced by \pacmc[1] with those by \pacmc[d], for $d=2,3,4,5$, confirms that the higher the degree is, the closer to $0$ the corresponding mark is, since the points for the same benchmark share the same x-axis value.

\begin{figure}[t]
    \centering
    \resizebox{\linewidth}{!}{
        \includegraphics{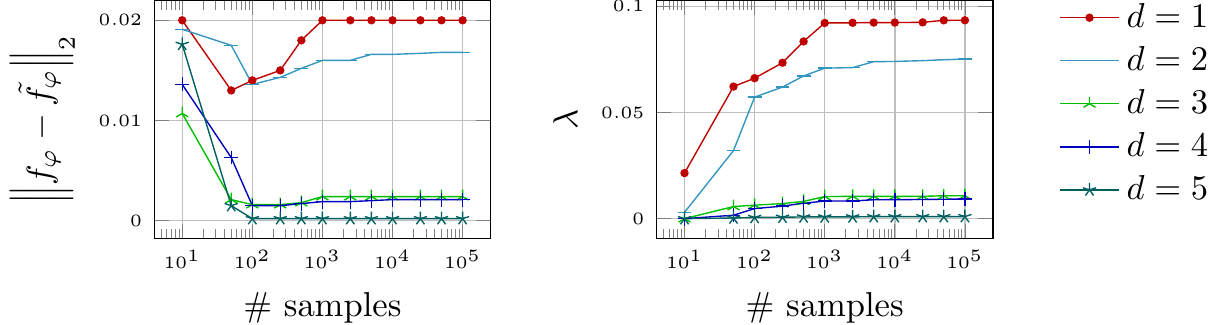}
    }
    \caption{Value of $\norm[2]{f_{\lsf}-\approxfun{f}_{\lsf}}$ and of $\margin$ vs.\@ degree of polynomials and number of samples}
    \label{fig:dtmcProbabilityNormSamples9}
\end{figure}

In Fig.~\ref{fig:dtmcProbabilityNormSamples9} we show the value of $\norm[2]{f_{\lsf} - \approxfun{f}_{\lsf}}$, that is, how close the polynomial $\approxfun{f}_{\lsf}$ is to the actual rational  function $f_{\lsf}$, for different degrees of the polynomial and the number of samples, as well as the corresponding values of the computed $\margin$. 
The plots are relative to one benchmark such that the corresponding rational function (a polynomial having degree 96) computed by \storm can be managed by \matlab without incurring in obvious numerical errors, while having the margin $\margin$ computed by \pacmc[2] reasonably large ($\margin \approx 0.063$).

From the plots we can see that we need at least 100 samples to get a rather stable value for $\norm[2]{f_{\lsf} - \approxfun{f}_{\lsf}}$, so that the value of $\norm[2]{f_{\lsf} - \approxfun{f}_{\lsf}}$ is smaller for higher degrees, which reflects the more accurate polynomial approximation to the original function, in line with the plots in Fig.~\ref{fig:experimentsDtmcsProbabilityLambdaAndBoxes}. 
However, for the same degree, as the number of samples increases, the value of $\norm[2]{f_{\lsf} - \approxfun{f}_{\lsf}}$ does not always decrease. 
This happens because with few points, the polynomial can fit them well, as indicated by the low value of $\margin$; 
however, such few points are likely to be not enough to represent accurately the shape of $f_{\lsf}$.
By increasing the number of samples, the shape of $f_{\lsf}$ can be known better, in particular where it changes more;
this makes it more difficult for the polynomials to approximate $f_{\lsf}$, 
as indicated by the larger $\margin$;
on the other hand, they get closer to $f_{\lsf}$, so $\norm[2]{f_{\lsf} - \approxfun{f}_{\lsf}}$ stabilizes.

\subsection{Relation of the Statistical Parameters $\errorRate$ and $\significanceLevel$ with the Distances $\norm[2]{f_{\lsf} - \beta}$ and $\upperboundFXBeta(\approxfun{f}_{\lsf}, X, \beta)$}
\label{ssec:experimentsStatisticalparametersVsDistancesbeta}

\begin{figure}[t]
    \centering
    \resizebox{\linewidth}{!}{
    \includegraphics{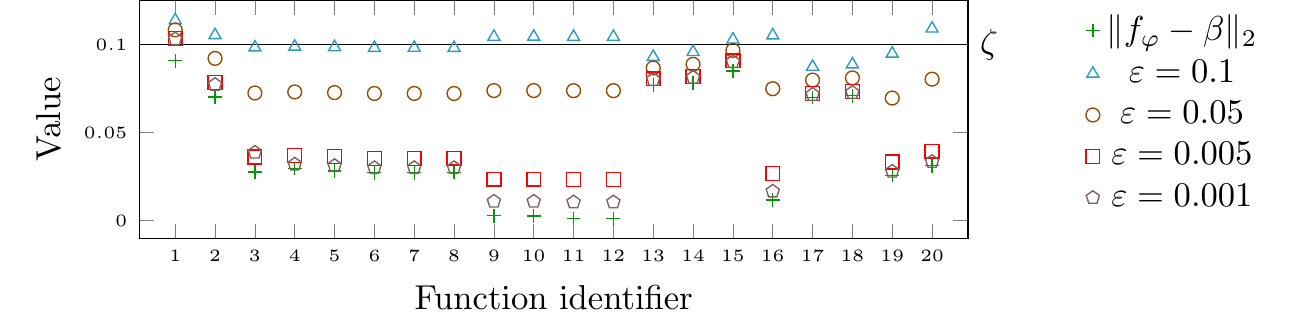}
    }
    \caption{Comparison of $\norm[2]{f_{\lsf} - \beta}$ with $\upperboundFXBeta(\approxfun{f}_{\lsf}, X, \beta)$ for $\significanceLevel = 0.05$ and different $\errorRate$}
	\label{fig:Lemma3ComparewithPolynomials}
\end{figure}

We now consider the behavior of $f_{\lsf}$ and whether it remains close to some number $\beta$ within $\safetyLevel$, that is, we want to check whether $\norm[2]{f_{\lsf} - \beta} < \safetyLevel$ holds.
Here we set the safety level $\safetyLevel$ to be $0.1$ and consider different $\beta$’s values for different functions $f_{\lsf}$. 
We consider 20 rational functions computed by \storm that \matlab can work without incurring in obvious numerical errors, such as those outside the probability interval $\interval{0}{1}$.
For each of the function, we computed the corresponding value of $\beta$ by sampling 20 points for the parameters and taking the average value, rounded to the first decimal, of the function on them.
We rely on Lemma~\ref{lem:nearMargin} to perform the analysis;
the results are shown in Fig.~\ref{fig:Lemma3ComparewithPolynomials}.

In the figure, we plot the actual value of $\norm[2]{f_{\lsf} - \beta}$, the boundary $\safetyLevel$, and the value of $\upperboundFXBeta(\approxfun{f}_{\lsf}, X, \beta)$ computed with respect to $\significanceLevel = 0.05$ and different choices of $\errorRate$ for the 20 functions.
As we can see, the smaller $\errorRate$, the higher the number of cases on which Lemma~\ref{lem:nearMargin} ensures $\norm[2]{f_{\lsf} - \beta} < \safetyLevel$;
this is expected, since a smaller $\errorRate$ increases the number of samples, so the approximating polynomial $\approxfun{f}_{\lsf}$ gets closer to the real shape of $f_{\lsf}$.
Moreover, when $\norm[2]{f_{\lsf} - \beta}$ is already close to $\safetyLevel$, there is little space for $\approxfun{f}_{\lsf}$ to differ from $f_{\lsf}$, as happens for the e.g.\@ the function 1.
Thus it is more difficult for us to be able to rely on Lemma~\ref{lem:nearMargin} to check whether $\norm[2]{f_{\lsf} - \beta} < \safetyLevel$ holds, even if this actually the case.

\subsection{Comparison with the Taylor Expansion}
\label{subsec:comparewithTaylor}

\begin{figure}[t]
    \centering
    \resizebox{\linewidth}{!}{
        \includegraphics{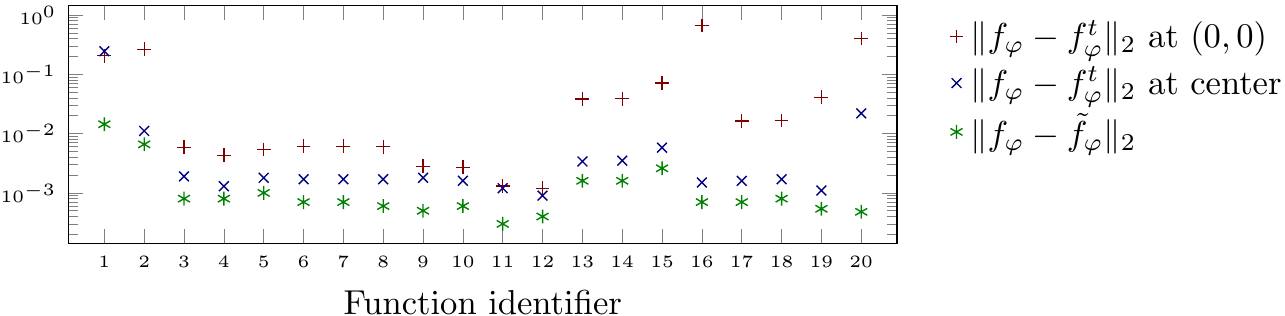}
    }
    \caption{Distance from $f_{\lsf}$ of the Taylor expansion vs.\@ the approximating polynomial}
    \label{fig:distanceTaylor}
\end{figure}
We compare the accuracy of PAC approximation against that of the Taylor expansion on the same cases used for Fig.~\ref{fig:Lemma3ComparewithPolynomials}; 
the comparison is shown in Fig.~\ref{fig:distanceTaylor}. 
For the comparison with $f_{\lsf}$, we consider the degree 2 for both the Taylor expansion $f^{t}_{\lsf}$ and the approximating polynomial $\approxfun{f}_{\lsf}$ computed with $\errorRate = \significanceLevel = 0.05$.
For the Taylor expansion $f^{t}_{\lsf}$, we considered two versions: 
the expansion at the origin, i.e., $(0,0)$ for two parameters (marked as ``$\norm[2]{f_{\lsf} - f^{t}_{\lsf}}$ at $(0,0)$'' in Fig.~\ref{fig:distanceTaylor}), that is commonly used since it is cheaper to compute than the expansions at other points; 
and 
the expansion at the barycenter of the space of the parameters (marked as ``$\norm[2]{f_{\lsf} - f^{t}_{\lsf}}$ at center'' in Fig.~\ref{fig:distanceTaylor}).

As we can see from the plot, that uses a logarithmic scale on the y-axis, the distance $\norm[2]{f_{\lsf} - \approxfun{f}_{\lsf}}$ is between one and three orders of magnitude smaller than $\norm[2]{f_{\lsf} - f^{t}_{\lsf}}$ at the origin.
If we consider $\norm[2]{f_{\lsf} - f^{t}_{\lsf}}$ at the barycenter, we get values much closer to $\norm[2]{f_{\lsf} - \approxfun{f}_{\lsf}}$, but still larger up to one order of magnitude.
One of the reasons for this is that the Taylor expansion reflects local properties of $f_{\lsf}$ at the expansion point, while the PAC approximation provides a global approximation of $f_{\lsf}$, thus reducing the overall distance.
Compared with the Taylor expansion, the PAC approximation has also other advantages:
the PAC approximation can handle both white-box and black-box problems, i.e., we do not need to get the analytical form of $f_{\lsf}$; 
this means that we can treat it as a black box and get a good approximation of it while the Taylor expansion can only be applied after computing the actual function $f_{\lsf}$. 
Moreover, the PAC approximation is able to generate polynomials with any given error rate and provide probabilistic guarantee, while Taylor expansion cannot.

\subsection{Extension to Reward Models}

\begin{figure}[t]
    \centering
    \resizebox{\linewidth}{!}{
        \includegraphics{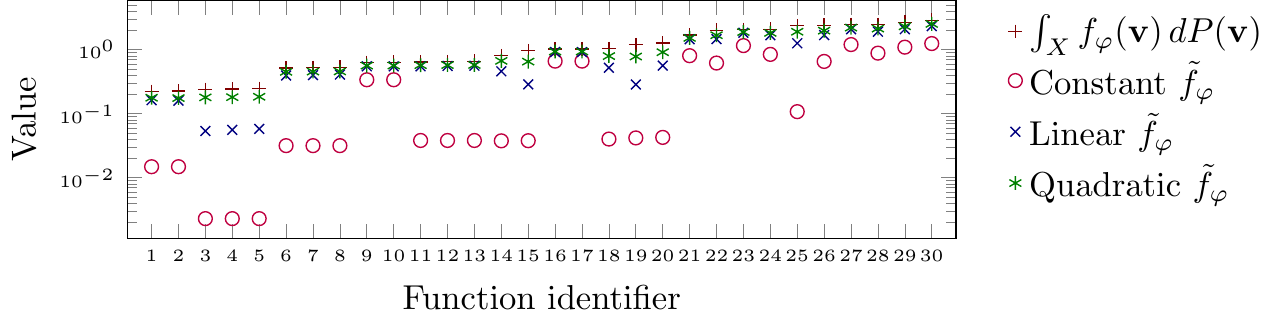}
    }
    \caption{Lower bound for Eq.~\eqref{eq:rewardmodel} by PAC approximation with different degrees}
    \label{fig:RewardModel}
\end{figure}

In Fig.~\ref{fig:RewardModel} we show how Eq.~\eqref{eq:lowerboundofrewardmodel} applies to $\int_{X} f_{\lsf}(\parameters)\,dP(\parameters)$ for a selection of 30 reward properties $f_{\lsf}$ computed by \storm; 
as usual, we compute $\approxfun{f}_{\lsf}$ with $\errorRate = \significanceLevel = 0.05$.
In the figure, we report the actual value of $\int_{X} f_{\lsf}(\parameters)\,dP(\parameters)$ as well as that of the expression in Eq.~\eqref{eq:lowerboundofrewardmodel} computed for the polynomial PAC approximations $\approxfun{f}_{\lsf}$ at different degrees.
As we can see from Fig.~\ref{fig:RewardModel}, the higher the degree of $\approxfun{f}_{\lsf}$, the more accurate the estimation of the $\int_{X} f_{\lsf}(\parameters)\,dP(\parameters)$'s lower bound is.
In particular, the quadratic $\approxfun{f}_{\lsf}$ provides a very close lower bound for $\int_{X} f_{\lsf}(\parameters)\,dP(\parameters)$;
this is remarkable, since evaluating $\max(\approxfun{f}(\parameters) - \margin)$ in Eq.~\eqref{eq:lowerboundofrewardmodel} is often an NP-hard non-convex optimization problem~\cite{DBLP:journals/siamcomp/Sahni74,pardalos1991algorithms} and, for cubic or higher polynomials, it requires specialized theories and tools to solve~\cite{lasserre2009moments,yang2022computing,DBLP:journals/mp/Lasserre08}.

\section{Conclusion}
\label{sec:conclusion}

In this paper, we presented a PAC-based approximation framework for studying several properties of parametric discrete time Markov chains.
Within the framework, we can analyze the safety regions of the domain of the parameters, check whether the actual probability fluctuates around a reference value within a certain bound, and get a polynomial approximating the actual probability rational function with given $(\errorRate, \significanceLevel)$-PAC guarantee.
An extended experimental evaluation confirmed the efficacy of our framework in analyzing parametric models.

As future work, we plan to investigate the applicability of the scenario approach to other Markov models and properties, such as continuous time Markov chains and Markov decision processes with and without rewards, where parameters can also control the rewards structures.
Moreover, we plan to explore the combination of the scenario approach with statistical model checking and black-box verification and model learning.

\subsubsection{Acknowledgements.}
We thank the anonymous reviewers for their useful remarks that
helped us improve the quality of the paper.
Work supported in part by 
the CAS Project for Young Scientists in Basic Research under grant No.\@ YSBR-040,
NSFC under grant No.\@ 61836005,
the CAS Pioneer Hundred Talents Program,
the ISCAS New Cultivation Project ISCAS-PYFX-202201,
and
the ERC Consolidator Grant 864075 (\emph{CAESAR}).
\newline\protect\includegraphics[height=8pt]{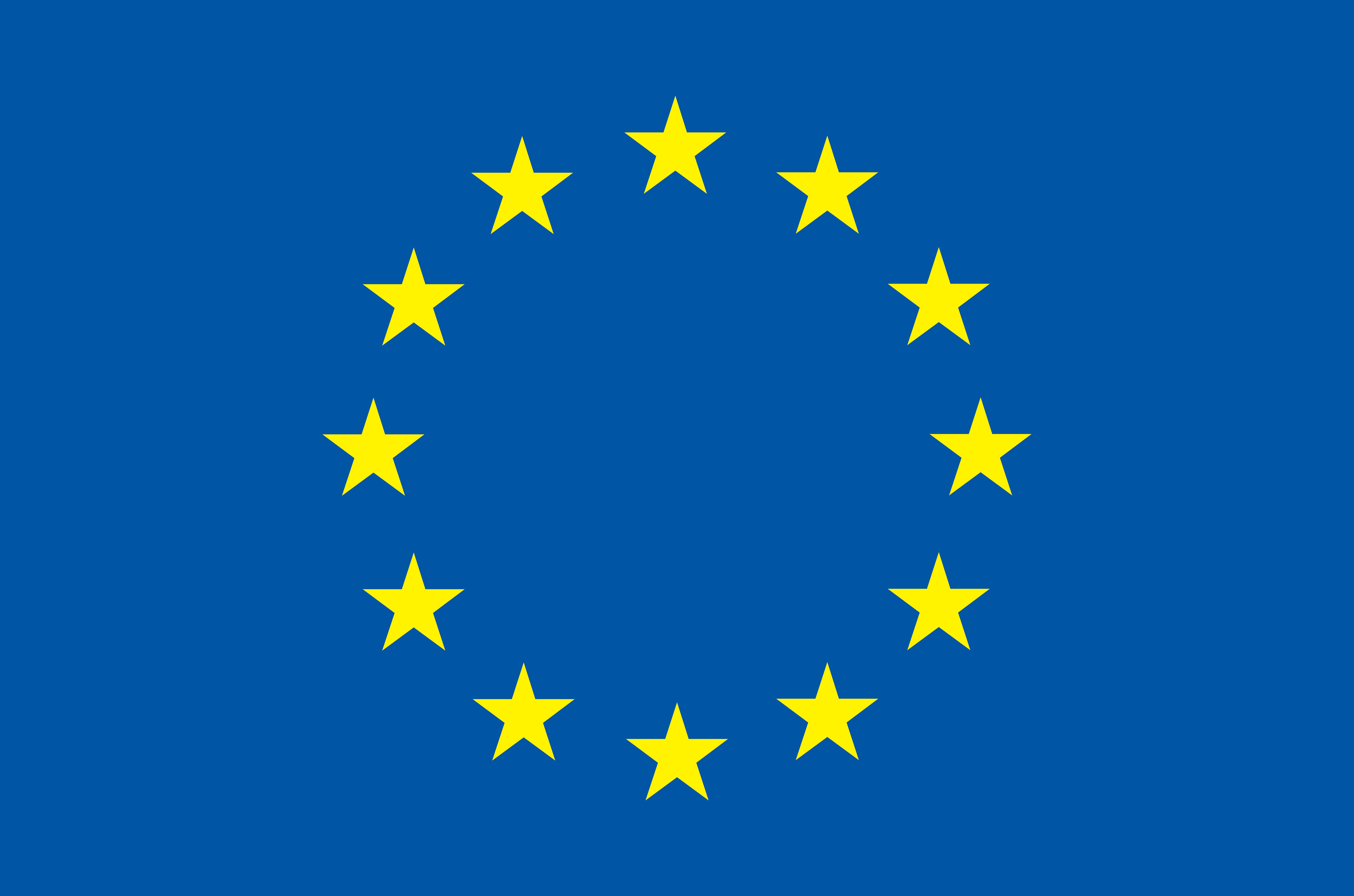} This project is part of the European Union’s Horizon 2020 research and innovation programme under the Marie Sk\l{}odowska-Curie grant no.\@ 101008233.

\paragraph{Data Availability Statement.}
An environment with the tools and data used for the experimental evaluation presented in this work is available in the following Zenodo repository:
\url{https://doi.org/10.5281/zenodo.8181117}.

\bibliographystyle{splncs04}
\bibliography{biblio}

\newpage
\appendix
\section{Proofs of the Lemmas}
\label{app:proofs}

\lemLinearApproximation*
\begin{proof}
On the one hand, if the condition $\approxfun{f}(\parameters) + \margin < \safetyLevel$ holds for each $\parameters \in X$, then we have 
\begin{align*}
    P(f(\parameters) < \safetyLevel) 
    & \geq P(f(\parameters) < \approxfun{f}(\parameters) + \margin)\\
    & {} \geq P(f(\parameters) - \approxfun{f}(\parameters) < \margin) \\
    & {} \geq P(\abs{f(\parameters) - \approxfun{f}(\parameters)} < \margin).
\end{align*}
By the definition of PAC approximation, it follows that
\[
    P(f(\parameters) < \safetyLevel) \geq 1 - \errorRate
\]
so the parameters space $X$ is safe with the confidence of $1 - \significanceLevel$. 

On the other hand, we first assume that for each $\parameters \in X$, the condition $f(\parameters) < \safetyLevel$ holds. 
Since $\approxfun{f}$ is a PAC approximation of $f$ with $(\errorRate, \significanceLevel)$-guarantee, this implies that
\[
    P(\abs{\approxfun{f}(\parameters) - f(\parameters)} \leq \margin) \geq 1 - \errorRate
\]
according to Def.~\ref{def:PACmodel}. 
Moreover, we have 
\[
    P(f(\parameters) - \margin \leq \approxfun{f}(\parameters) \leq f(\parameters) + \margin) \geq 1 - \errorRate.
\]
Therefore, 
\[
    P(\approxfun{f}(\parameters) \leq f(\parameters) + \margin) \geq 1 - \errorRate
\] 
holds. 
Since $f(\parameters) < \safetyLevel$, this implies that 
\[
    P(\approxfun{f}(\parameters) \leq \safetyLevel + \margin) \geq 1 - \errorRate, 
\]
which is equivalent to the following inequality:
\[  
    P(\approxfun{f}(\parameters) > \safetyLevel + \margin) < \errorRate.
\] 
This contradicts the assumption ``$P(\approxfun{f}(\parameters) -\margin > \safetyLevel) > \errorRate$'' in the statement of the lemma, thus the condition we assumed ``$\forall \parameters \in X$, the condition $f(\parameters) < \safetyLevel$ holds'' cannot be true. 
From this we derive that there exists a point $\parameters \in X$ such that $f(\parameters) > \safetyLevel$ with confidence $1 - \significanceLevel$, i.e., the domain of parameters $X$ is unsafe, as desired.
\end{proof}

\lemNearMargin*
\begin{proof}
In the following sequence of (in)equalities, we motivate between them how to obtain the next term in the sequence.
\begin{align*}
        & \norm[p]{f - \beta} \\
        \intertext{By definition of the $L_{p}$ norm}
        = & \sqrt[p]{\int_{X} \abs{f(\parameters) - \beta}^{p}\,d\parameters}\\
        \intertext{By splitting the integral region into two parts}
        = & \sqrt[p]{\int_{X_{1}} \abs{f(\parameters) - \beta}^{p}\,d\parameters + \int_{X_{2}} \abs{f(\parameters) - \beta}^{p}\,d\parameters}  \\
        \intertext{Where $X = X_{1} \uplus X_{2}$ and $X_{1}$ is such that $P(X_{1}) \geq 1 - \errorRate$ and for each $\parameters \in X_{1}$, we have $\abs{f(\parameters) - \approxfun{f}(\parameters)} \leq \lambda$. 
        Then by known triangular inequality of the $L_{p}$ norm}
        \leq & \sqrt[p]{\left( \sqrt[p]{\int_{X_{1}} \abs{f(\parameters) - \approxfun{f}(\parameters)}^{p}\,d\parameters} + \sqrt[p]{\int_{X_{1}} \abs{\approxfun{f}(\parameters) - \beta}^{p}\,d\parameters} \right)^{p} + \int_{X_{2}} \abs{f(\parameters) - \beta}^{p}\,d\parameters}\\
        \intertext{From the condition $P(\abs{f(\parameters) - \approxfun{f}(\parameters)} \leq \margin) \geq 1 - \errorRate$}
        \leq & \sqrt[p]{\left(\sqrt[p]{(1 - \errorRate) \size{X} \margin^{p}} + \sqrt[p]{\int_{X} \abs{\approxfun{f}(\parameters) - \beta}^{p}\,d\parameters} \right)^{p} + \int_{X_{2}} \abs{f(\parameters) - \beta}^{p}\,d\parameters}\\
        \leq & \sqrt[p]{\left(\margin \sqrt[p]{(1 - \errorRate) \cdot \size{X} } + \norm[p]{\approxfun{f} - \beta} \right)^{p} + \errorRate \cdot \size{X} \cdot \max(\abs{M - \beta}^{p}, \beta^{p})}. 
\end{align*}
Since by the lemma assumption we have that 
\[
    \sqrt[p]{\left(\margin \sqrt[p]{(1 - \errorRate) \cdot \size{X} } + \norm[p]{\approxfun{f} - \beta} \right)^{p} + \errorRate \cdot \size{X} \cdot \max(\abs{M - \beta}^{p}, \beta^{p})} < \safetyLevel
\] 
holds, it follows that the property $\norm[p]{f - \beta} < \safetyLevel$ is satisfied as well, with confidence $1-\significanceLevel$.
\end{proof}

\lemRewardModel*
\begin{proof}
By splitting the integral region $X$ into two parts $X = X_{1} \uplus X_{2}$ where $X_{1}$ is such that $P(X_{1}) \geq 1 - \errorRate$ and for each $\parameters \in X_{1}$ we have $\abs{f(\parameters) - \approxfun{f}(\parameters)} \leq \margin$, as in the proof of Lemma~\ref{lem:nearMargin}, we have
\[
    \int_{X} f(\parameters)\,dP(\parameters) = \int_{X_{1}} f(\parameters)\,dP(\parameters) + \int_{X_{2}} f(\parameters)\,dP(\parameters).
\]
Since we have $f(\parameters) \geq 0$ for each $\parameters \in X$ by definition of $f$, it follows that $\int_{X_{2}} f(\parameters)\,dP(\parameters) \geq 0$. 
This implies that
\begin{align*}
    \int_{X} f(\parameters)\,dP(\parameters) & \geq \int_{X_{1}} f(\parameters)\,dP(\parameters) \\
    & \geq \int_{X_{1}} (\approxfun{f}(\parameters) - \margin)\,dP(\parameters) \\
    & = \int_{X} (\approxfun{f(\parameters)} - \margin)\,dP(\parameters) - \int_{X_{2}} (\approxfun{f}(\parameters) - \margin)\,dP(\parameters) \\
    & \geq \int_{X} (\approxfun{f}(\parameters) - \margin)\,dP(\parameters) - \errorRate \cdot \size{X} \cdot \max_{\parameters \in X_{2}}(\approxfun{f}(\parameters) - \margin) \\ 
    & \geq \int_{X} (\approxfun{f}(\parameters) - \margin)\,dP(\parameters) - \errorRate \cdot \size{X} \cdot \max_{\parameters \in X}(\approxfun{f}(\parameters) - \margin).
\end{align*}
This means that if the approximation polynomial $\approxfun{f}$ satisfies 
\[
    \int_{X} (\approxfun{f}(\parameters) - \margin)\,dP(\parameters) - \errorRate \cdot \size{X} \cdot \max_{\parameters \in X}(\approxfun{f}(\parameters) - \margin) \geq \rewardLevel,
\]
then 
\[
    \int_{X} f(\parameters)\,dP(\parameters) \geq \int_{X} (\approxfun{f}(\parameters) - \margin)\,dP(\parameters) - \errorRate \cdot \size{X} \cdot \max_{\parameters \in X}(\approxfun{f}(\parameters) - \margin) \geq \rewardLevel
\]
holds with confidence $1 - \significanceLevel$, as required.
\end{proof}

\end{document}